\newlength{\flexwidth}
\newtheorem{hypothesis}{Hypothesis}
\newcommand{\reviseprev}[2]{#2} 
\newcommand{\revise}[2]{{\textcolor{red}{#1}}{\textcolor{blue}{#2}}}
\renewcommand{\revise}[2]{#2} 
\newcommand{\revisebox}[1]{\fcolorbox{blue}{white}{#1}}
\newcommand{\revisefinal}[2]{{\textcolor{red}{#1}}{\textcolor{blue}{#2}}}
\renewcommand{\revisefinal}[2]{#2} 
\renewcommand{\revisebox}[1]{#1} 
\newcommand{\reviseboxfinal}[1]{\fcolorbox{blue}{white}{#1}}
\renewcommand{\reviseboxfinal}[1]{#1}
\newacronym{uav}{UAV}{Unmanned Aerial Vehicle}
\newacronym{magd}{MAGD}{Mobility Adaptive Gradient Descent}
\newacronym{tad}{TAD}{Time-evolving Anomaly Detection}
\newacronym{gps}{GPS}{Global Positioning System}
\newacronym{tof}{ToF}{Time of Flight}
\newacronym{rssi}{RSSI}{Received Signal Strength Indicator}
\newacronym{mle}{MLE}{Maximum Likelihood Estimator}
\newacronym{rp}{RP}{Reputation Propagation}
\newacronym{ls}{LS}{Least Square}
\newacronym{wls}{WLS}{Weighted Least Square}
\newacronym{swls}{SWLS}{Secure Weighted Least Square}
\newacronym{ln-1}{LN-1}{$l$1 Norm}
\newacronym{gd}{GD}{Gradient Descend}
\newacronym{admm}{ADMM}{Alternating
Direction Method of Multiplier}
\newacronym{pdf}{PDF}{Probability Density Function}
\newacronym{mse}{MSE}{Mean Squared Error}
\newacronym{fim}{FIM}{Fisher Information Matrix}
\newacronym{crlb}{CRLB}{Cramer-Rao Lower Bound}
\newacronym{3gpp}{3GPP}{Third Generation Partnership Project}
\newacronym{sgd}{SGD}{Stochastic Gradient Descent}
\newacronym{cdf}{CDF}{Cumulative Distribution Function}
\newacronym{c2}{C2}{Command and Control}
\newacronym{ue}{UE}{User Equipment}
\newacronym{2d}{2D}{Two-dimensional}
\newacronym{3d}{3D}{Three-dimensional}
\newacronym{ris}{RIS}{Reconfigurable Intelligent Surface}
\newacronym{aoa}{AoA}{Angle of Arrival}
\newacronym{nlos}{NLoS}{None-Line-of-Sight}
\newacronym{los}{LoS}{Line-of-Sight}
\newacronym{tdoa}{TDOA}{Time Difference of Arrival}
\newacronym{doa}{DOA}{Direction of Arrival}
\newacronym{bs}{BS}{Base Station}
\newacronym{rf}{RF}{Radio Frequency}
\newacronym{arhs}{AHRS}{Attitude and Heading Reference Systems}
\newacronym{b5g}{B5G}{Beyond-Fifth-Generation}
\newacronym{6g}{6G}{Sixth Generation}
\newacronym{doc}{DOC}{Degree of Consistency}
\newacronym{wdoc}{WDOC}{Weighted Degree of Consistency Approach}
\newacronym{twdoc}{TWDOC}{Thresholded Weighted Degree of Consistency Approach}
\newacronym{fista}{FISTA}{Fast Iterative Shrinkage-Thresholding
Algorithm}
\newacronym{lp}{LP}{Linear Programming}
\newacronym{lts}{LTS}{Least Trimmed Square algorithm}
\newacronym{gta}{GTA}{Game Theoretic Aggregation algorithm}
\newacronym{glrt}{GLRT}{Generalized Likelihood Ratio Test}
\DeclareSIUnit{\belmilliwatt}{Bm}
\DeclareSIUnit{\dBm}{\deci\belmilliwatt}
\newtheorem{theorem}{Theorem}
\begin{document}

	\title{Trustworthy UAV Cooperative Localization: \reviseprev{On}{} Information Analysis of Performance and Security \reviseprev{Concerns}{}}
	
	
	 \author{
	 	Zexin~Fang,~\IEEEmembership{Student Member,~IEEE,}
		Bin~Han,~\IEEEmembership{Senior Member,~IEEE,}
		and
		Hans~D.~Schotten,~\IEEEmembership{Member,~IEEE}
		\thanks{
    Copyright~\copyright~2025 IEEE. Personal use of this material is permitted. However, permission to use this material for any other purposes must be obtained from the IEEE by sending a request to pubs-permissions@ieee.org.
    
    Z. Fang, B. Han, and H. D. Schotten are with University of Kaiserslautern-Landau (RPTU), Germany. H. D. Schotten is with the German Research Center for Artificial Intelligence (DFKI), Germany. B. Han (bin.han@rptu.de) is the corresponding author. This work is supported by the German Federal Ministry of Education and Research within the project Open6GHub (16KISK003K/16KISK004). }
	}
	
	\bstctlcite{IEEEexample:BSTcontrol}
	
	\maketitle

\begin{abstract}
  This paper presents a trustworthy framework for achieving accurate cooperative localization in multiple \gls{uav} systems. The \gls{crlb} for the \gls{3d} cooperative localization network is derived, with particular attention given to practical scenarios involving non-uniform spatial distribution of anchor nodes. Challenges of mobility are then addressed with \gls{magd}.

  In the context of system security, we derive the \gls{crlb} of localization under the influence of falsified information. The methods and strategies of injecting such information and their impact on system performance are studied. To assure robust performance \reviseprev{while}{} under falsified data, we propose \reviseprev{mitigate solutions:}{a mitigation solution named} \gls{tad}. Furthermore, we model \reviseprev{the relationship between system performance and the density and magnitude of falsified information}{the system performance regarding the density and magnitude of falsified information}, focusing on realistic scenarios where the adversary is resource-constrained. With the vulnerability of cooperative localization understood, we \reviseprev{applied}{apply} \gls{tad} and \reviseprev{folumated}{formulate} an optimization problem from the adversary's perspective. Next, we discuss the design principles of an anomaly-detector, with emphasis of the trade-off of reducing such optimum and system performance. Additionally, we also deploy a \gls{rp} mechanism to fully utilize the anomaly detection and further optimize the \gls{tad}. Our \reviseprev{methodology is}{proposed approaches are} demonstrated through numerical simulations.
\end{abstract}
	
	\begin{IEEEkeywords}
		\reviseprev{Coordinated attack, }{}\gls{uav}\reviseprev{ in \gls{3gpp}, gradient descend}{}, \reviseprev{}{B5G/6G, security}, cooperative localization, \reviseprev{}{3D localization}, \reviseprev{3D}{} \gls{crlb}.
	\end{IEEEkeywords}
	
	\IEEEpeerreviewmaketitle
	
	\glsresetall
	
	\section{Introduction}\label{sec:introduction}
    Multi-\gls{uav} systems hold significant promise for revolutionizing various domains, particularly the future \reviseprev{Sixth Generation (6G)}{\gls{b5g} and \gls{6g} mobile networks}. It is notably simpler to deploy \gls{uav}-mounted \acp{bs} and relay nodes than traditional cellular \acp{bs}, providing a cost-efficient and reliable solution to fulfill the requirements of \gls{b5g}/\gls{6g} by tackling coverage gaps and intelligent resource allocation. For instance, multi-\gls{uav} systems can provide reliable communication links in challenging environments, support connectivity in remote or disaster areas, and overcome limits of ground infrastructure \cite{FengSCE}. 
   
    \reviseprev{\gls{gps} modules may face challenges in providing \acp{uav} precise position information}{As the most popular solution to provide \acp{uav} precise position information, \ac{gps} can be challenged} to support such use cases, especially in in urban areas, tunnels, or indoor environment\reviseprev{ for \gls{uav} to support such functionalities}{}. Alternative options like radio trilateration\reviseprev{ can be used but have limited coverage and require calibration and installation cost}{, on the other hand, are commonly inferior in coverage while taking extra cost for installation and calibration} \cite{5Gposition,Radiotri}. \revise{}{Some cutting-edge research are focusing on \ac{3d} space localization \cite{RIS3d,siglenode3d,5sensors}, offering complementary insights. Yet \reviseprev{there are}{} a few drawbacks \reviseprev{that}{} shall be noted: \begin{enumerate*}[label=\emph{\roman*)}]
    	\item Single-node localization methods can be highly unreliable under \reviseprev{challenging}{harsh} channel conditions. 
    	\item To measure angles on an \gls{uav}, it typically requires either an antenna array or rotating the \gls{uav}, introducing extra weight and energy consumption\reviseprev{to \gls{uav}}{}. 
    	\item Moreover, sophisticated algorithms are required for accurate angle estimation. 
    \end{enumerate*}
    Overall, the low reliability and high cost associated with angle measurement are constraining these methods from practical applications.}
    
    \reviseprev{As a consequence for that}{As a pillar \cite{3gpp_uas_support}}, the \gls{3gpp} \reviseprev{defined}{specifies in its Release 17}
    \reviseprev{}{ as a requirement of \gls{uav} applications,} that \reviseprev{}{the} network \reviseprev{should}{shall} have the capability to facilitate network-based \gls{3d} space positioning\reviseprev{ and}{, as well as} \gls{c2} communication\reviseprev{, as specified it as requirements of \gls{uav} applications}{}.  \revise{ }{Meanwhile, \gls{3gpp} Release 17 require a \gls{uav} to broadcast its position information within a short-range area (at least $50\metre$) for collision avoidance. A cooperative localization network utilizing broadcast position information and distance measurements can provide accurate estimates for targets in a \gls{gps}-denied environment.}
    
    \revise{}{Though cooperative localization based on mobile anchors has attracted recent research attention \cite{cooploc2024Hu, Cooploc2024jin, CP2022minetto, huang2015distributed}, its security aspects remain critical and understudied, as location information can be easily manipulated.
    While security issues in static sensor networks have been extensively studied, previous works \cite{Ragmalicous, Jhagame, DOCwon2019} focus primarily on comparing anomaly detection algorithms based on performance or computational cost, yet fail to examine how detection errors further reduce the pool of reliable information, compounding overall performance. In \cite{Tomic2022DetectingDA, scp2021beko, LighadXie2021}, the authors provide a comprehensive analysis of detector performance, offering valuable insights into the interplay between detection error and localization capabilities. However, they omit information-theoretic analysis of falsified data's impact and only examine distance-spoofing attacks, whereas in our multi-UAV scenarios, more sophisticated attack strategies should be considered. On the one hand, \cite{Ragmalicous, Jhagame, DOCwon2019, Tomic2022DetectingDA, scp2021beko} neglects position errors while focusing solely on distance estimation errors in static networks. In \gls{uav} scenarios, position errors are unavoidable, requiring sophisticated modeling. Additionally, \glspl{uav} mobility creates varying attack densities where static defenses may fail during concentrated attacks. This necessitates a resilient adaptive detection strategy leveraging historical data.}

    \revise{}{Inspired by prior research, our work delivers a comprehensive \gls{crlb}-based performance analysis while establishing a secure cooperative localization framework.} 
    We outline the novelties of our work: 
     \begin{itemize}
     \item In contrast to the majority of existing research that solely rely on a Gaussian model to approximate all error sources, we separately consider measurement errors in self-positioning and distance estimation. The power of both errors are randomized to closely emulate real localization modules and channel environment. 
     \item  We derive the \gls{3d} \gls{crlb} for localization networks, with geometric interpretations and considerations for non-uniform anchor node distributions. After evaluating traditional localization methods in such scenarios, we introduce the \gls{magd} algorithm to enhance robust and accurate cooperative localization in dynamic environments with mobile target and anchor nodes.
     \item We design various attack modes to inject falsified information into the system. This information can be biased or contain large errors. We derive the effectiveness of these attack modes and subsequently validated them through simulation. The interaction between the attacker and the detector is formulated as an optimization problem.
     \end{itemize}
     
	Especially, compared to its preliminary conference version~\cite{FHS2023reliable}, this article \begin{enumerate*}[label=\emph{\roman*)}]
	\item further deepens our understanding to localization networks with the bound analysis,
	\item enriches our knowledge of conventional solutions through \reviseprev{the}{} comprehensive benchmarks, and
	\item \reviseprev{}{delivers} performance and security measures from an information analysis perspective. 
\end{enumerate*}
     
     The remainder of this paper is organized as follows. In Sec.~\ref{Errmodel} and Sec.~\ref{sec:crlb} we investigate the error model and \gls{crlb} for the cooperative localization network.
     In Sec.~\ref{sec:MAGD}, we demonstrate the performance of localization algorithm in the presence of non-uniformly distributed anchor \gls{uav}s. We also assess the effectiveness of our proposed \gls{magd} through numerical simulations. Then in Sec.~\ref{attack} we outline the potential attack scheme and propose defend scheme, and validate our proposed scheme with numerical simulations presented in Sec.~\ref{sec:SIMUTDAD}. The paper concludes by summarizing \reviseprev{the}{our} main findings in \reviseprev{Section}{Sec.}~\ref{conclu}.
    \section{Related works}
    \revise{}{Existing research on the localization network predominantly \reviseprev{focuses}{focus} on terrestrial scenarios within static sensor networks, frequently overlooking altitude considerations \cite{sensor2d,staticsensor1,staticsensor2}. A more detailed investigation is \reviseprev{required}{widely suggested} for 3D space localization involving airborne applications\reviseprev{, as many researchers have suggested}~\cite{drones6020028,3dspacelocal}. Diverging from terrestrial scenarios, \gls{3d} environments pose challenges for localization frameworks that \reviseprev{have}{are} proven effective in \gls{2d} settings \cite{Kumari2019}. \cite{3duavloc2013vill} demonstrates the application of \gls{3d} sensor networks for distance-based \gls{uav} localization. However, this work does not fully address the implications of non-uniform anchor node distribution in \gls{3d} space, with particular emphasis on the latitude distribution. To the best of our knowledge, this aspect has only been investigated in our previous work \cite{FHS2023reliable}. Similarly, \emph{Zheng} et al. propose in \cite{5sensors} a \gls{3d} sensor network localization method for UAVs that integrates computationally expensive \gls{aoa} and \gls{tdoa} techniques. Regarding network-free \gls{3d} space localization,
    \reviseprev{He}{\emph{He}} et al. \reviseprev{proposed}{propose} a \gls{3d} localization framework~\cite{RIS3d}, which incorporates \gls{ris} and \gls{aoa}, demonstrating centimeter-level accuracy for targets positioned meters away from a mobile station. However, the \reviseprev{author notes}{authors note} that including \gls{nlos} path \reviseprev{could}{can} sharply increase the computational complexity of the framework. Meanwhile, an alternative \reviseprev{methodology}{method} presented in \cite{siglenode3d} focuses on achieving \gls{3d} \gls{uav}\reviseprev{s}{} localization with one mobile node and \gls{doa} technique. } 
    \revise{}{From the perspective of anomaly detection in localization networks, \cite{MLXu2021} developed a machine learning-based anomaly detection method for cooperative underwater localization, though its pre-training and online learning requirements introduce significant computational overhead. Meanwhile, \cite{MSK2021rss} proposed \gls{wls} and \gls{ln-1} techniques that effectively secure localization networks against both uncoordinated and coordinated malicious anchor attacks. \emph{Jha}, \emph{Tomic}, and \emph{Beko} et al. developed several \gls{doc}-based approaches \cite{DOCwon2019, Tomic2022DetectingDA, scp2021beko}, each employing distinct methods for anomaly exclusion. Specifically, \cite{DOCwon2019} implements an iterative exclusion process for anomalies, whereas \cite{Tomic2022DetectingDA} and \cite{scp2021beko} rely on the \gls{glrt} methodology. Similar to our approach, \cite{LighadXie2021} proposed a lightweight secure localization method that evaluates \gls{tdoa} results by comparing the receiver noise variance with the estimated noise variance, offering an efficient alternative to more computationally intensive techniques. }
   

	\section{Error model setup}\label{Errmodel}
    \subsection{Distance estimation error}\label{disERR}
    Measuring \gls{tof}, typically demands precise clock resolution, which is crucial for accurate and reliable performance. Its accuracy can be centimeter level within a valid ranging range of several meters. However, ensuring and sustaining synchronization between devices can be challenging, particularly in complex settings with multiple \gls{tof} devices and fluctuating ambient conditions. Compared to \gls{tof} ranging, \gls{rssi} ranging doesn't require synchronization or specific sensors, as long \gls{uav} is equipped with a radio receiver. 
    Due to \gls{rf} signal propagation effects, \gls{rssi} faces challenges, making precise distance estimates hard to obtain \cite{rssidisad}. Despite this limitation, its implementation in \gls{uav} scenarios is still straightforward and cost-efficient, given its reliance solely on signal strength information. \gls{rssi} ranging relies on a path loss model:
    \begin{equation}\label{eq:pathloss}
		P_\mathrm{r}\textit{(d)}= P_\mathrm{r}(\textit{d}_0) - 10n_\mathbf{p} \mathrm{log}(\frac{\textit{d}}{\textit{d}_0}) + P_\mathrm{r}^{\Delta}(\textit{d}),
	\end{equation}
    where $P_\mathrm{r}\textit{(d)}$ indicates the \gls{rssi} at the distance $\textit{d}$ from the anchor \gls{uav}; $\textit{d}_0$ is the predefined reference distance, meanwhile $P_\mathrm{r}(\textit{d})$ is the \gls{rssi} measurement value at $\textit{d}_0$; $n_p$ denotes the path loss factor of the radio link.
    
    Due to the channel fading, \gls{rssi} measurement error $P_\mathrm{r}^{\Delta}(\textit{d})$ is inevitable and results in a distance estimation error $\varepsilon(d)$. The measurement error $P_\mathrm{r}^{\Delta}(\textit{d})$ follows a zero-mean Gaussian distribution with standard deviation $\sigma_\mathrm{r}$. We examined \gls{rssi} measurement results between two Zigbee nodes \cite{shortrangenovel, shortrangeoutdoor} and Sigfox nodes \cite{longseparation}. The results reveal that $\sigma_\mathrm{r}$ shows a seemingly random pattern with minimal correlation with distance, suggesting fading dominates over path loss in urban or indoor environments. We consider both outdoor urban environments and large indoor factories, where \gls{nlos} radio links are prevalent. Consequently, $P_\mathrm{r}^{\Delta}(\textit{d})$ is expected to behave similarly. To simplify our analysis, we consider $P_\mathrm{r}^{\Delta }{\mathrm{(\textit{d},t)}}\sim\mathcal{N}\left(0,\sigma_\mathrm{r}^2(t)\right)$ and $\sigma_\mathrm{r}(t)\sim\mathcal{U}\left(\sigma^{\mathrm{min}}_r,\sigma^{\mathrm{max}}_r\right)$.

  
    Specifying the path loss model after Tab.~\ref{tab:pathpara}, the simulation results are illustrated in Fig.~\ref{fig:RSSImean2}. The inaccurate \gls{rssi} estimation leads to a distance estimation error $\varepsilon(\textit{d})$, as shown in Fig.~\ref{fig:RSSI100}, which is zero-mean Gaussian distributed with $\varepsilon(\textit{d})\sim\mathcal{N}(0,\sigma_\textit{d}^2(\textit{d}))$, while $\sigma_\textit{d}(\textit{d})$ fluctuates but increases over the distance. It becomes evident that distance estimation for large distances can be highly unreliable. \revise{}{The communication cost increases significantly with additional exposure of anchor \glspl{uav} to the target \gls{uav} due to the maintenance of multiple communication links, which often demands higher energy consumption and increased data transmission rates. Furthermore, anchor \glspl{uav} must manage interference across multiple links and ensure stable connectivity, adding to the overall overhead.}  
    \begin{table}[ht]
       \centering\caption{Pass loss model parameters}\label{tab:pathpara}
       \begin{tabular}{|c|c|c|c|}
         \hline
         $n_\mathbf{p}$ & $\textit{d}_0$ & $P_\mathrm{r}(\textit{d}_0)$ & $[\sigma_{\mathrm{min}}^r,\sigma_{\mathrm{max}}^r$]\\
         \hline
         3.0 & $1.0\metre$ & $-30.0 \si{\dBm}$& [0.5,2] \\
         \hline
    \end{tabular}
    \vspace{-3mm}
    \end{table}
    \begin{figure}[!htbp]
		  \centering
      \includegraphics[width=0.8\linewidth]{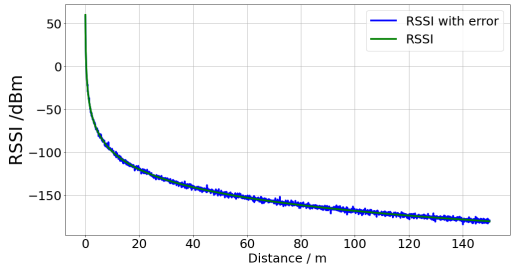}
      \vspace{-3mm}
		  \caption{Simulated \gls{rssi} over distance}
		  \label{fig:RSSImean2}
	  \end{figure}
    \begin{figure}[!htbp]
		\centering
		\includegraphics[width=0.85\linewidth]{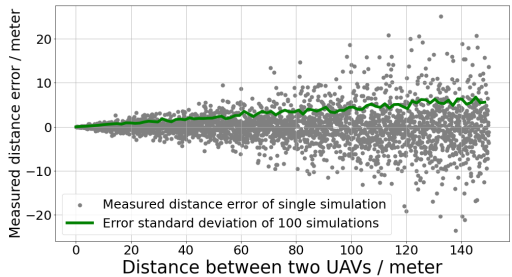}
		\vspace{-3mm}
    \caption{Distance estimation error (meters) over spacing between two \acp{uav}}
		\label{fig:RSSI100}
	\end{figure}
    
   \subsection{Position measurement error}
    With the distance error model addressed, we can examine the position error. Altitude readings from a GPS module are often highly unreliable. The altitude readings from a \gls{gps} module can be highly unreliable. As a result, alternative methods are frequently employed in aircraft for altitude estimation, with technologies such as \gls{arhs} and altimeters being common choices. In practice, a comprehensive solution emerges through the integration of data from the \gls{gps} module and altitude estimation methods. 
   
    Assuming such an integrated solution is deployed in \gls{uav}s, we consider a Gaussian localization error with uniformly distributed power across dimensions. For an anchor \gls{uav} set $\mathcal{U}=\{u_0,u_1,u_2...u_n\}$ at time $t$, the actual and pseudo positions of the $n_\mathrm{th}$ \gls{uav} are modeled as $\mathbf{p}_n(t)$ and $\mathbf{p^{\circ}}_n(t)$, respectively:
   \begin{equation*}
        \mathbf{p}_n(t) = [x_n(t),y_n(t),z_n(t)],
    \end{equation*}
    \begin{equation*}\label{eq:PosErr1}
        \mathbf{p^{\circ}}_n(t) = \mathbf{p}_n(t) + [x_n^{\Delta}(t), y_n^{\Delta}(t),z_n^{\Delta}(t)],
    \end{equation*}    
   \begin{equation*}\label{eq:PosErr2}
		[x_n^{\Delta}(t), y_n^{\Delta}(t),z_n^{\Delta}(t)]\sim\mathcal{N}^2\left(0,\sigma_{\mathbf{p},n}^2/3\right),
   \end{equation*}
   \begin{equation*}\label{eq:PosErr3}     \sigma_{\mathbf{p},n}\sim\mathcal{U}\left(\sigma^{\mathrm{min}}_\mathbf{p},\sigma^{\mathrm{max}}_\mathbf{p}\right),   
   \end{equation*}
   where $[x_n^{\Delta}(t), y_n^{\Delta}(t),z_n^{\Delta}(t)]$ is the position error with power $\sigma^2_{\mathbf{p},n}$. Within a valid coverage of cooperative localization, a target \gls{uav} $u_k$ measures its distance to anchor \gls{uav}s. We model the real distance $d_{k,n}=\left\Vert\mathbf{p}_k(t) - \mathbf{p}_n(t)\right\Vert$ and the measured distance $d^{\circ}_{k,n}={d}_{k,n} + {\varepsilon({d}_{k,n})}$.
   \section{Bound Analysis of Localization Networks}{\label{sec:crlb}}
  
   \subsection{Error modelling}\label{subsec:ERRMODEL}
   To achieve a comprehensive and accurate simulation of \gls{uav} localization, we factor in both position errors and distance estimation errors. 
   A \gls{mle} to estimate the target position $\mathbf{p}_k$ can be described:
   \begin{equation}\label{eq:optimize}
   \begin{split}
         {\mathbf{p}}_k &= \arg\min\limits_{[x,y,z]}\sum_{n\neq k}\left\vert\Vert\mathbf{p^{\circ}}_n-{\mathbf{p}}_k\Vert - {d}^{\circ}_{k,n}\right\vert.\\
   \end{split}     
   \end{equation}

  A zero-mean Gaussian variable $X$ with uniformly distributed standard deviation $\sigma\in\mathcal{U}(a,b)$ has the \gls{pdf}, while take $t = \frac{1}{\sigma}$ and $d\sigma = -\frac{1}{t^2}dt$:
  \begin{equation}
   f_X(x) = \frac{1}{b - a} \cdot\int_{t_b}^{t_a} \frac{-1}{t\sqrt{2\pi}} e^{-\frac{(x-\mu)^2}{2}*t^2} dt.
   \end{equation}
   \begin{figure}[!htbp]
		\centering
        \subfigure[Numerical assessment of distance error (meters) model\label{fig:FittedERRC}]{
            \includegraphics[width=0.82\linewidth]{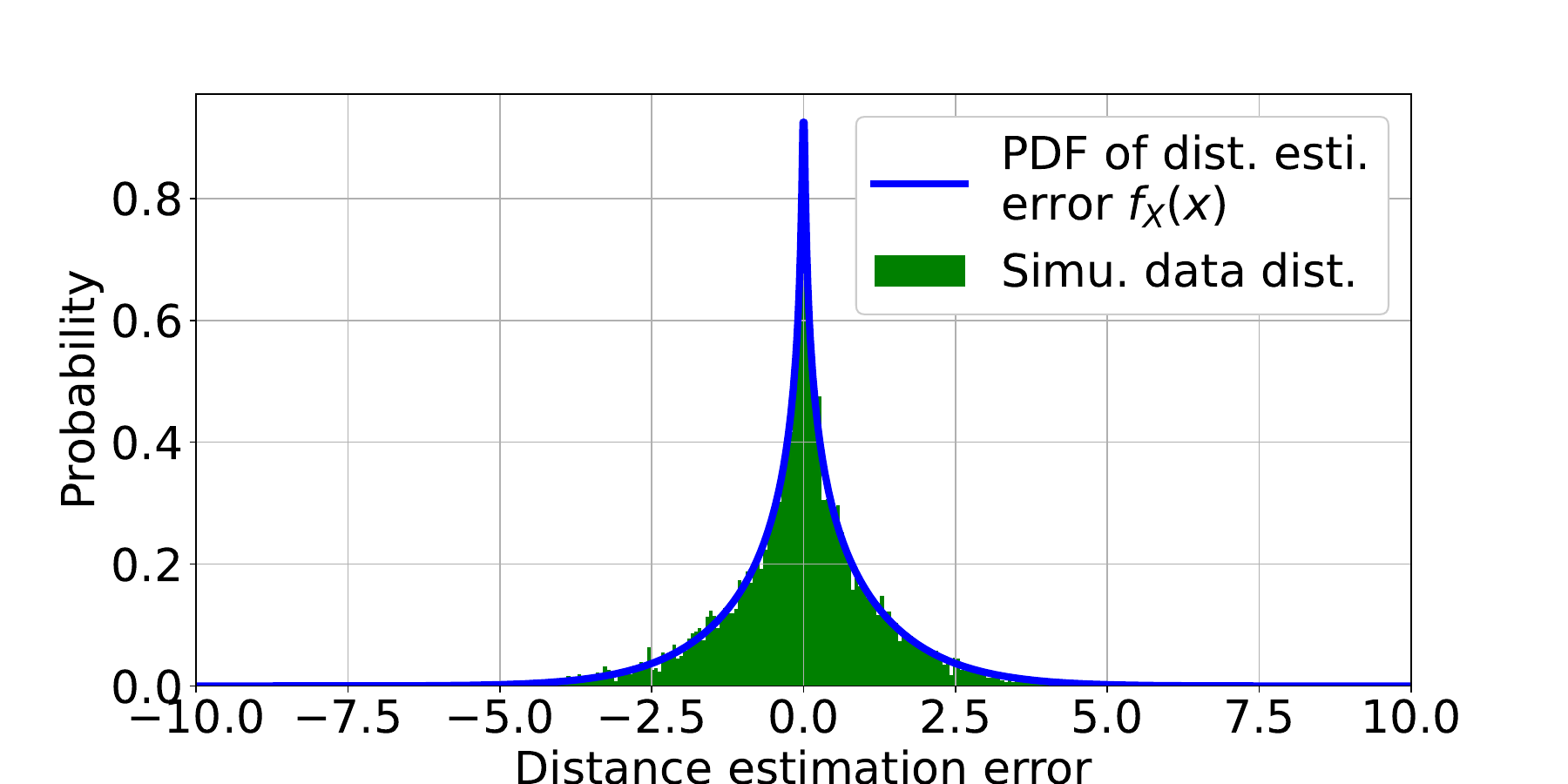}}
        \subfigure[Numerical assessment of converted error (meters) model\label{fig:FittedG}]{
		      \includegraphics[width=0.82\linewidth]{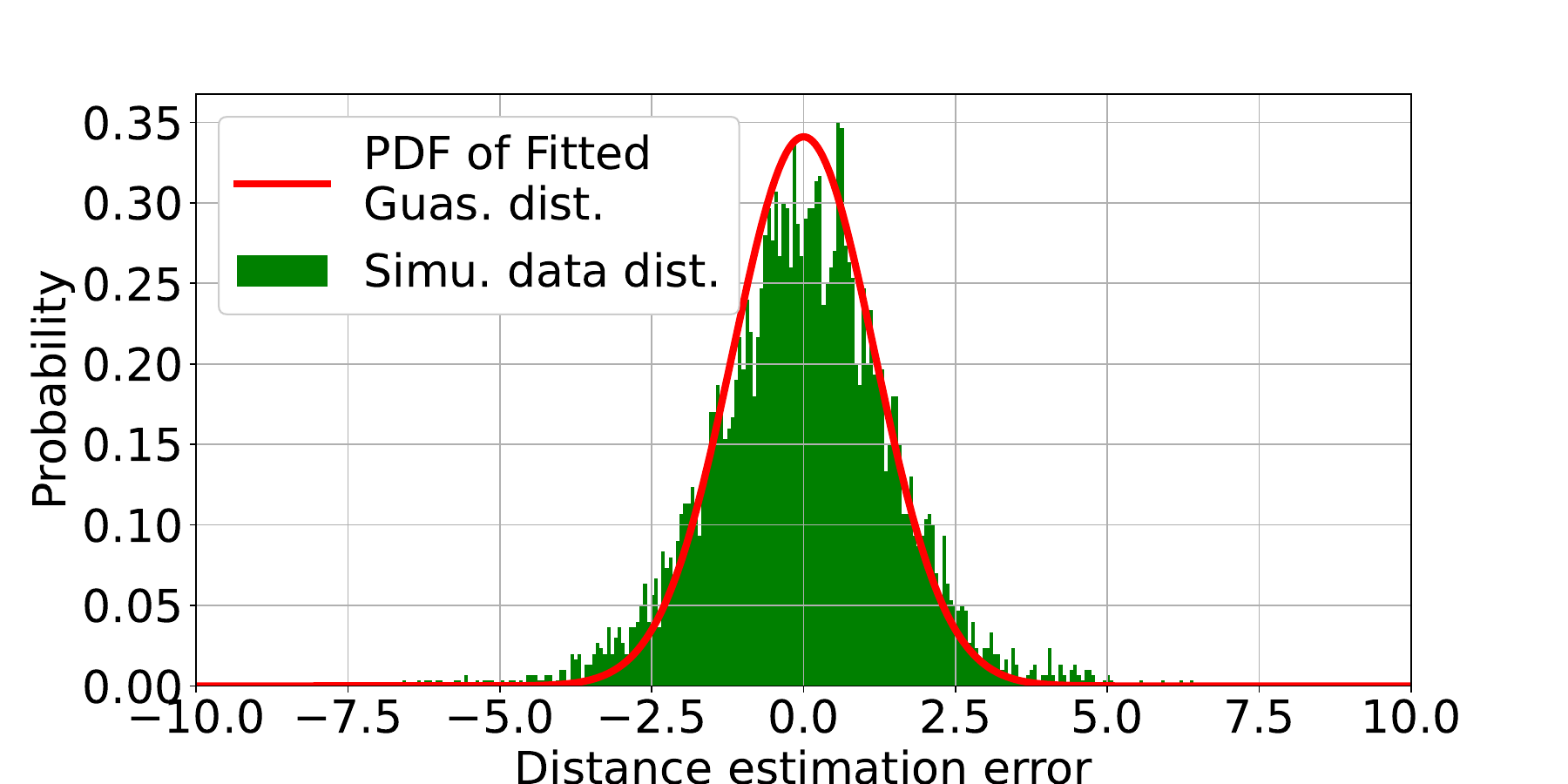}}
        \subfigure[Numerical assessment of converted error (meters) standard deviation\label{fig:coverteddist}]{
		      \includegraphics[width=0.82\linewidth]{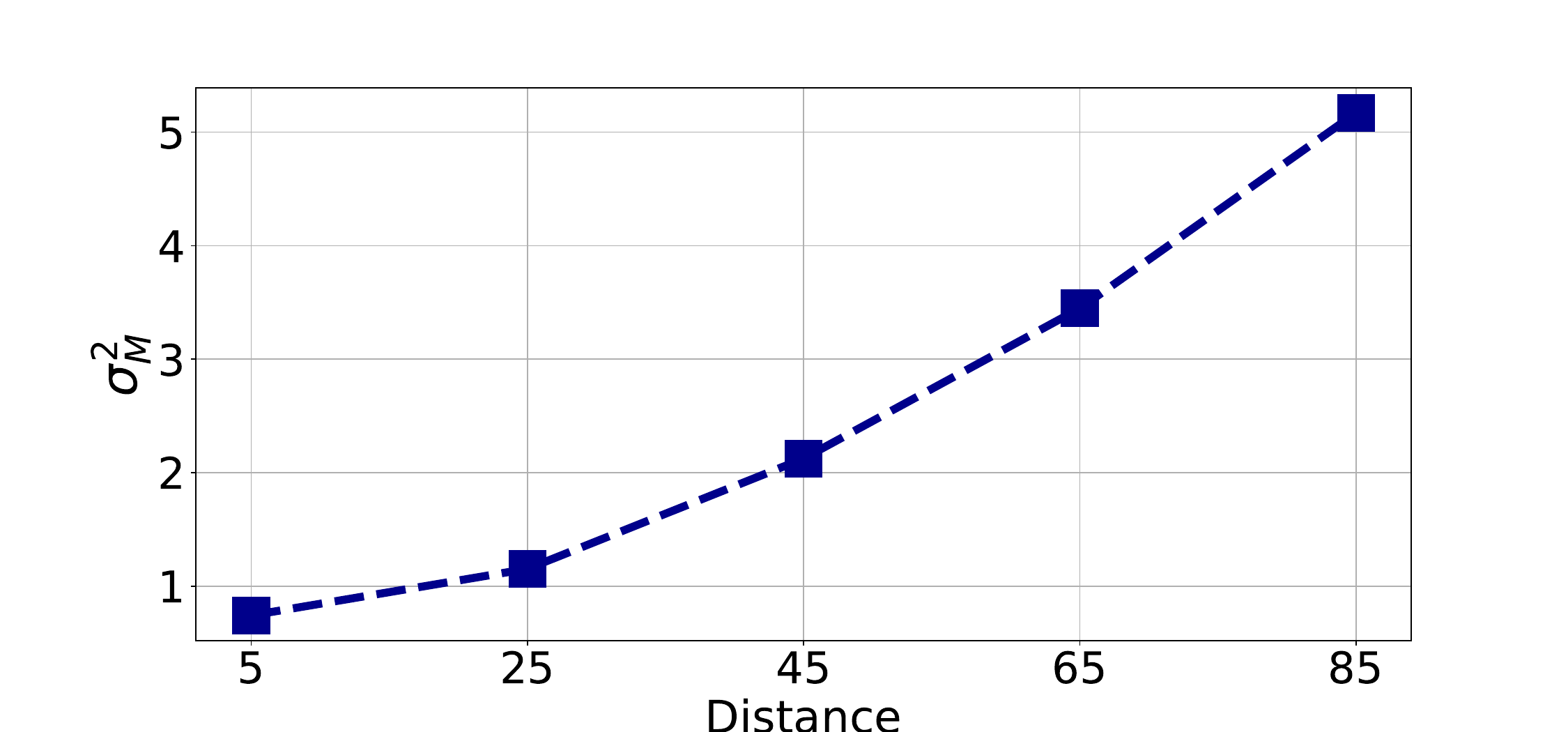}}
              \vspace{-1mm}
	\caption{Error modelling}
    \label{fig:errconvert}
 \end{figure}
 
  As illustrated, the $f_X(x)$ resembles an expression with $\frac{1}{t}$ in the integrand, reminiscent of the form of an error function. However, it shall be noted that $f_X(x)$ is analytically intractable and can only be evaluated numerically.
 With given parameters listed in Subsec.~\ref{disERR} and maximum distance $\textit{d}_\text{max} = 50\metre$, the \gls{pdf} of distance estimation error $\varepsilon_{k,n}^{d}$ can be accessed, presented in Fig.~\ref{fig:FittedERRC}. With the \gls{pdf} of $\varepsilon_{k,n}^{d}$ being accessed, we can further access the modelled distance estimation error $\varepsilon_{k}^{M}$.
  We assume position error is Gaussian distributed, the position measurement error can be easily converted to distance estimation error with $\varepsilon_{k,n}^{p} = \sqrt{(x_n^{\Delta})^2 + (y_n^{\Delta})^2 + (z_n^{\Delta})^2 }$. By adding $\varepsilon_{k,n}^{p}$ and $\varepsilon_{k,n}^{d}$ with random angle, $\varepsilon_{k,n}^{p}$ can be converted to $\varepsilon_{k,n}^{d}$, we can access the \gls{pdf} of $\varepsilon_{k}^{M}$, demonstrated in Fig.~\ref{fig:FittedG}, with maximum $d_{k,n} = 50\metre$ and path loss model parameters follow Tab.~\ref{tab:pathpara}, the position error power $\sigma_{\mathbf{p},n}$ ranging from $0.1$ to $3.0$. The \gls{pdf} of $\varepsilon_{k}^{M}$ can be approximated with a zero mean Gaussian distribution with $\varepsilon_{k}^{M}\sim \mathcal{N}(0 , \sigma_{M}^2)$. For computational convenience, the measured distances take the approximation ${d}^{\circ}_{k,n} \sim \mathcal{N}({d}_{k,n} , \sigma_{M}^2)$.
 Maintaining constant path loss model parameters and position error, we explore variations in the maximum $d_{k,n}$, representing coverage from $5\metre$ to $100\metre$. The relationship between $\sigma_M$ and coverage is shown in Fig.~\ref{fig:coverteddist}.

   \subsection{CRLB in uniform spatial distribution}
   To guide the design of multi-\glspl{uav} system functionalities, it is essential to discuss the \gls{crlb} of \gls{3d} cooperative localization.  For an unbiased estimator $\hat{\theta}=[\hat{\theta}_1, \hat{\theta}_2, \ldots, \hat{\theta}_q]^T$, its variance is constrained by $\sigma_{\hat{\theta}_i} \geqslant [I^{-1}(\theta)]_{ii}$, where $I(\theta)$ represents the \gls{fim} with dimensions $q \times q$ and $\theta$ represents the true value. \gls{fim} measures the amount of information that a random variable $z$ carries about $\theta$. \gls{fim} is defined in the following manner \cite{app13032008}:
   \begin{equation}\label{eq:fimeq}
   [\mathbf{I}(\theta)]_{ij} = -\mathbb{E}\left[\frac{\partial^2}{\partial\theta_i\partial\theta_j} \ln f(z;\theta)\right].
   \end{equation}
   Considering all UAVs uniformly distributed within the area, the conditional \gls{pdf} of measured distance distribution is
   \begin{equation}
   f({d}^{\circ}_{k,n}\vert\mathbf{p}_k) = \frac{1}{\sqrt{2\pi\sigma^2_M}} \exp\left\{\sum_{ n\neq k}^{N}\frac{-({d}^{\circ}_{k,n} - {d}_{k,n})^2}{2\sigma^2_M}\right\}.
   \end{equation}
   The Euclidean distance between $u_k$ and $u_n$ is defined as: $d_{k,n} = \sqrt{(x_k - x_n)^2 + (y_k - y_n)^2 + (z_k - z_n)^2}$. The \gls{fim} is expressed by the following equation Eq.~\eqref{eq:fim3D}.

   \gls{crlb} can be described as: 
   \begin{equation}\label{eq:CRLB}
    \sigma_{p}^2 \geqslant tr\{\mathbf{I}^{-1}(h)\},\\
   \end{equation}
   \begin{equation}\label{eq:ivFIM}
    \mathbf{I}^{-1}(h) = \frac{adj\{\mathbf{I}(h)\}}{|\mathbf{I}(h)|}.   
   \end{equation}  

   \begin{figure*}[!t]
     \begin{align}\label{eq:fim3D}
    \mathbf{I}(h) = \frac{1}{\sigma_M^2} \begin{bmatrix}
    \sum_{n=1}^{N} \frac{(x_k - x_n)^2}{d_{k,n}^2} & \sum_{n=1}^{N} \frac{(x_k - x_n)(y_k - y_n)}{d_{k,n}^2} & \sum_{n=1}^{N} \frac{(x_k - x_n)(z_k - z_n)}{d_{k,n}^2} \\
    \sum_{n=1}^{N} \frac{(x_k - x_n)(y_k - y_n)}{d_{k,n}^2} & \sum_{n=1}^{N} \frac{(y_k - y_n)^2}{d_{k,n}^2} & \sum_{n=1}^{N} \frac{(y_k - y_n)(z_k - z_n)}{d_{k,n}^2} \\
    \sum_{n=1}^{N} \frac{(x_k - x_n)(z_k - z_n)}{d_{k,n}^2} & \sum_{n=1}^{N} \frac{(y_k - y_n)(z_k - z_n)}{d_{k,n}^2} & \sum_{n=1}^{N} \frac{(z_k - z_n)^2}{d_{k,n}^2}
    \end{bmatrix}
    \end{align}
    \rule{\textwidth}{0.4pt} 
    \vspace*{\fill} 
     \end{figure*}

   \begin{theorem}\label{theorem:crlb}
   Considering all anchor \glspl{uav} uniformly distributed in space and positioned symmetrically around the target\revise{}{,} \gls{crlb} of the localization can be described as a function of $\sigma_M$ and number of anchor \glspl{uav}, denoted as $\gls{crlb}(\sigma_M,N)$. 
   \end{theorem}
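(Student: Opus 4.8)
The plan is to show that the symmetry hypothesis forces the \gls{fim} in Eq.~\eqref{eq:fim3D} to collapse to a scalar multiple of the $3\times 3$ identity matrix, after which Eq.~\eqref{eq:CRLB}--\eqref{eq:ivFIM} produce a closed form depending on nothing but $\sigma_M$ and $N$. The argument rests on one deterministic identity plus an isotropy observation.

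First I would record that, writing the unit displacement vector $\hat{\mathbf{u}}_{k,n}=(\mathbf{p}_n-\mathbf{p}_k)/d_{k,n}=(\hat u_x,\hat u_y,\hat u_z)$, every entry of $\sigma_M^2\mathbf{I}(h)$ is a sum of products of components of $\hat{\mathbf{u}}_{k,n}$, and that the three diagonal sums add up to $\sum_{n}\hat u_x^2+\hat u_y^2+\hat u_z^2=\sum_n \frac{(x_k-x_n)^2+(y_k-y_n)^2+(z_k-z_n)^2}{d_{k,n}^2}=\sum_n 1 = N$, irrespective of geometry. This is the only conservation law needed; everything else is symmetry. Next I would invoke the uniform, symmetric placement of the anchors about the target: this makes the set of displacement directions isotropic, so (i) each off-diagonal sum such as $\sum_n \hat u_x\hat u_y$ cancels in pairs, and (ii) the three diagonal sums $\sum_n\hat u_x^2$, $\sum_n\hat u_y^2$, $\sum_n\hat u_z^2$ are equal. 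Combined with the trace identity, each diagonal entry equals $N/3$, so the \gls{fim} reduces to $\tfrac{N}{3\sigma_M^2}$ times the $3\times 3$ identity matrix.

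Finally, substituting into Eq.~\eqref{eq:ivFIM} gives $\mathbf{I}^{-1}(h)=\tfrac{3\sigma_M^2}{N}$ times the identity, whence Eq.~\eqref{eq:CRLB} yields
\begin{equation*}
\sigma_p^2 \geqslant tr\{\mathbf{I}^{-1}(h)\} = \frac{9\sigma_M^2}{N} =: \gls{crlb}(\sigma_M,N),
\end{equation*}
which manifestly depends only on $\sigma_M$ and $N$, proving the claim.

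The main obstacle is making the isotropy step rigorous, since ``uniformly distributed and positioned symmetrically'' admits two readings. For an exact finite symmetric configuration I would pair each anchor with its mirror images across the coordinate planes and note that the cross terms cancel pairwise while the squared components balance across axes; for a uniform prior on a ball or shell centered at the target I would instead compute the elementary moments $\mathbb{E}[\hat u_x\hat u_y]=0$ and $\mathbb{E}[\hat u_x^2]=\tfrac13$ of the expected \gls{fim}. I expect the cleanest write-up to use the second reading: the radial distribution of the anchors is irrelevant because the integrand depends on direction only, so the whole theorem reduces to those two moment computations together with the deterministic trace identity above.
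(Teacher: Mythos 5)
Your argument is correct for the theorem as stated, but it follows a genuinely different route from the paper and lands on a different constant. The paper never diagonalizes the FIM: it expands the determinant of Eq.~\eqref{eq:fim3D} through the Cayley--Menger/tetrahedron-volume identity (Eq.~\eqref{deterFIM}), writes the adjugate in terms of projected triangle areas (Eq.~\eqref{eq:adjfim3D}), and then replaces the resulting ratio $f_1/f_2$ of Eqs.~\eqref{eq:fdn}--\eqref{eq:fdn2} by averaged geometric quantities ($\Bar{d}$, $\bar{d_H}$, and projection fractions $\alpha,\beta,\gamma$) to obtain a closed form in $(\sigma_M,N)$, concluding $\sigma_p^2\geqslant 6\sigma_M^2/N$ in Eq.~\eqref{eq:CRLBclosed41}. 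Your isotropy argument --- off-diagonal sums cancel, diagonal sums equalize, the trace identity forces each diagonal entry to $N/3$, hence $\mathbf{I}(h)=\tfrac{N}{3\sigma_M^2}\mathbf{I}_3$ --- is shorter and in fact gives the exact value for a symmetric configuration: since $\mathrm{tr}\{\mathbf{I}(h)\}=N/\sigma_M^2$ always, the AM--HM inequality shows $\mathrm{tr}\{\mathbf{I}^{-1}(h)\}\geqslant 9\sigma_M^2/N$ with equality precisely at the isotropic point, so your $9\sigma_M^2/N$ is tight, whereas the paper's constant $6$ arises from the averaging approximations in Eq.~\eqref{eq:CRLBclosedf1} (and a linear projection constraint in Eq.~\eqref{eq:constraint} where the exact relation between a triangle's area and its three coordinate-plane projections is Pythagorean) and is therefore a looser bound rather than the attainable one. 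What the paper's longer route buys is the geometric decomposition $f_1/f_2$ itself, which is reused later for the non-uniform-distribution analysis (the $R_z\to 0$ limits and the scaling matrix of Eq.~\eqref{eq:scalemat}); your argument is cleaner but does not produce those objects. Two points to tighten in your write-up: mirror symmetry across the coordinate planes alone kills the cross terms but does not equalize the three diagonal sums (anchors clustered along one axis are mirror-symmetric yet anisotropic), so the finite-configuration reading needs axis-permutation symmetry as well; and under the uniform-prior reading you are computing the expected FIM via $\mathbb{E}[\hat u_x\hat u_y]=0$ and $\mathbb{E}[\hat u_x^2]=\tfrac13$, which matches the theorem's informal hypothesis but should be stated explicitly.
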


   \begin{proof}
   \revise{}{See Appendix.}
   \end{proof}

   According to Theorem~\ref{theorem:crlb}, the \gls{crlb} of cooperative localization 
   problem solely relies on $\sigma_M^2$ and $N$. $\sigma_M^2$ is partially determined by $\Bar{d}$ in our error model. To verify the relationship between \gls{crlb} and $N$ as well as $\Bar{d}$, 
   we directly calculated \gls{crlb} with Eq\reviseprev{}{s}.~\eqref{eq:CRLB}--\eqref{eq:fim3D} under different combinations of target \gls{uav} coverage and anchor \gls{uav} number, each combination was simulated 100 times to exclude randomness. Thus, a two dimensional descriptor of \gls{crlb} can be presented in Fig.~\ref{fig:CRLB3d}.
   \begin{figure}[!htbp]
		\centering
		\includegraphics[width=0.72\linewidth]{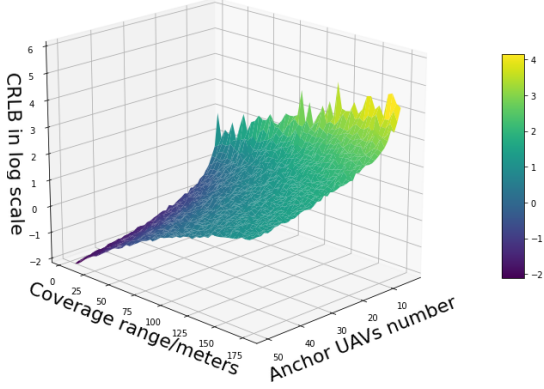}
        \vspace{-3mm}
		\caption{CRLB regarding the coverage and anchor UAVs number}
		\label{fig:CRLB3d}
	\end{figure}
 
   Our numerical assessment verified the relation between \gls{crlb} and the number of anchor \gls{uav}s, \revise{}{specifically as described in Eq.~\eqref{eq:CRLBclosed41}, namely $\sigma_{p}^2 \geqslant \frac{6\sigma_M^2}{N}$}: \revise{}{While verifying the exact qualitative relationship is challenging due to the impossibility of simulating an ideal uniform distribution of \glspl{uav} since $N$ can be small, a clear $\log(\frac{1}{N})$ curve is observable along the number and CRLB axes.}  Moreover, the results also validated the relation between \gls{crlb} and coverage. It is important to note that Eq.~\eqref{eq:CRLBclosed41} was derived under the assumption that \gls{uav}s adhere to a uniform distribution in three dimensions. 
   In practice, the expansion of coverage necessitates an increase in the number of anchor UAVs, thereby establishing a dynamic interdependence between these two factors.
   \subsection{CRLB in non-uniform spatial distribution}
   Computing the \gls{crlb} faces challenges when dealing with features to be estimated, which exhibit varying dynamic ranges. In a multi-UAV scenario, it is common for UAVs to be distributed non-uniformly in three dimensions. More specifically, UAVs are often denser in their distribution along the latitude dimension compared to the other two dimensions. \revise{}{Given the three-dimensional range constraints: $R_z \leqslant R_x = R_y$} The \gls{crlb}, when computed based on \gls{fim} without accounting for this scaling, may not be able to accurately represent the true lower bound of an estimator.  
   We can demonstrate it under an extreme case that all the \glspl{uav} are located in an area with its altitude range $R_z \to 0$. \revise{}{Referring Eqs.~(\ref{eq:CRLBclosed2})-(\ref{eq:fdn2}) in Appendix: $f_1$ represents the sum of squared areas for triangles formed by the target and two random anchor \glspl{uav} projected onto a Cartesian plane, normalized by their squared distances. $f_2$ represents the sum of volumes for triangles formed by the target and three random anchor \glspl{uav}, normalized by their squared distances. We now have:}\revise{}{\begin{align}\label{eq:limfdn}\begin{split}
       \lim_{R_z \to0}f_1 &= \lim_{R_z \to0}\sum_{n=1}^{N}\sum_{m=1}^{N}\frac{V_{n,m}^2(x,y)+V_{n,m}^2(y,z)+V_{n,m}^2(x,z)}{(d_{k,n}d_{k,m})^2}\\
       \lim_{z \to0}f_1 &= \sum_{n=1}^{N}\sum_{m=1}^{N}\frac{V_{n,m}^2(x,y)}{(d_{k,n}d_{k,m})^2,}
   \end{split}\nonumber
   \end{align}}
   \revise{}{\begin{align}\begin{split}
    \lim_{R_z \to0}f_2 & = \lim_{R_z \to0} \sum_{n=1}^N\sum_{m=1}^N\sum_{l=1}^{N}\frac{V_{n,m,l}^2(x,y,z)}{(d_{k,n}d_{k,m}d_{k,l})^2}\\
     \lim_{z \to0}f_2 &= 0.
    \end{split}\nonumber\end{align}}
   \revise{}{It is observed that $f_1$ converges to a constant, while $f_2$ converges to $0$. Without considering scaling, it is apparent that 
   \gls{crlb} may vary as $R_z$ varies, as per Eq.~\eqref{eq:CRLBclosed2}. }

  To demonstrate this, we calculated the \gls{crlb} based on Eqs.~(\ref{eq:CRLB})--(\ref{eq:ivFIM}) under that anchor UAVs were distributed within a spherical \revise{}{cap-like area (depicted in Fig.~\ref{fig:altitudedemo}) determined by varying $R_z$ and maximum distance between target and anchor \gls{uav} $\textit{d}_\text{max} = 50\metre$. } We then compared these \gls{crlb} results with a benchmark scenario in which all anchor \glspl{uav} were distributed within a spherical area with \revise{}{$R_x = R_y = R_z$ and $\textit{d}_\text{max} = 50\metre$}. \revise{}{The \gls{crlb} increases as $R_z$ decreases, which contradicts the findings shown in Fig.~\ref{fig:errorsp}. This suggests that conventional \gls{crlb} estimation approaches may not be suitable for multi-\gls{uav} cooperative localization problems.} 
   
  \revise{}{To handle one feature with different dynamic ranges and given typically known \gls{uav} distributions, a scale factor $s$ can be applied to the \gls{fim} in Eq.~\eqref{eq:fim3D} where this feature is involved. In the scenario previously described where $ R_z \leqslant R_x = R_y$ holds. We define the following constraints: \begin{enumerate*}[label=\emph{\roman*)}]
	  \item $s$ decreases monotonically with $R_z$,
    \item $s \in [1,+\infty)$,
    \item $s = 1$ when $R_x = R_y = R_z$.
\end{enumerate*} 
For \gls{3d} space, given that:
  $\textit{d}_\text{max}^2 = R_x^2 + R_y^2 + R_z^2$ and
  $R_x = R_y$. 
  We can derive:
  $R_x^2 = \frac{\textit{d}_\text{max}^2-R_z^2}{2}$
  The squared scale parameter $s$ is then defined as: 
  \begin{equation*}
      s = \frac{R_x}{R_z} = \frac{\sqrt{2(\textit{d}_\text{max}^2-R_z^2)}}{2R_z}.
  \end{equation*}
   Taking into account the first and second-order terms of this feature in Eq.~(\ref{eq:fim3D}), the corresponding scale matrix is defined in Eq.~(\ref{eq:scalemat}). The comparison between scaled and non-scaled \gls{crlb} results is shown in Fig.~\ref{fig:crlbscale}.}\revise{}{\begin{align}
       &\mathbf{I}_s(h) = \mathbf{I}(h) \circ \mathcal{S},\\
       \label{eq:scalemat} &\mathcal{S} = \begin{bmatrix} 1 & 1 & \frac{\sqrt{2(\textit{d}_\text{max}^2-R_z^2)}}{2R_z} \\ 1 & 1 & \frac{\sqrt{2(\textit{d}_\text{max}^2-R_z^2)}}{2R_z}\\ \frac{\sqrt{2(\textit{d}_\text{max}^2-R_z^2)}}{2R_z} & \frac{\sqrt{2(\textit{d}_\text{max}^2-R_z^2)}}{2R_z} & \frac{(\textit{d}_\text{max}^2-R_z^2)}{2R_z^2}\end{bmatrix}.
   \end{align}}
  
   \begin{figure}[!htbp]
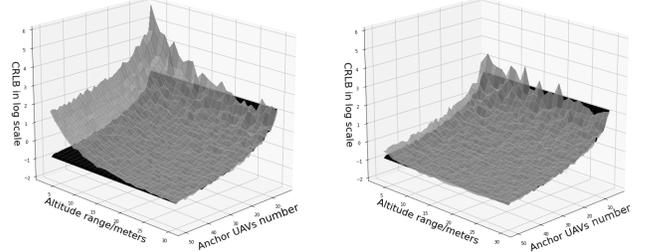

     \centering
     \begin{subfigure}[CRLB without scaling (gray surface) compared to benchmark (black surface)\label{fig:crlbnoscale}]{
        \includegraphics[width=0.47\linewidth]{CRLBuneven1.eps}}
     \end{subfigure}
     \begin{subfigure}[CRLB with scaling (gray surface) compared to benchmark (black surface)\label{fig:crlbscale}]{
        \includegraphics[width=0.47\linewidth]{CRLBuneven2.eps}}
     \end{subfigure}
     \vspace{-2mm}
     \caption{CRLB in non-uniform spatial distribution}
     \label{fig:CRLB2subfigures}
   \end{figure}

   \section{Adaptive and robust cooperative localization}{\label{sec:MAGD}}

   \subsection{Introduction to different localization techniques}{\label{introlocalt}}
   Thorough studies have been conducted focusing on distance-based localization techniques. We focus on three different localization techniques: \gls{ls}  based localization, \gls{ln-1} based localization, and \gls{gd} based localization, to investigate their localization performance in the presence of non-uniform spatial distribution of anchor \gls{uav}s. These techniques have been widely recognized for their robustness and efficiency in sensor network scenarios \cite{Ragmalicous,GVW2012efficient}.
   
   \textbf{LS based estimation:}
   A multi-\gls{uav}s localization problem can be directly solved by \gls{ls} technique with
   \begin{equation*}
   [x_k,y_k,z_k,\Vert\mathbf{p}_k\Vert^2]^\mathrm{T} = \mathrm{(\mathbf{A}^T\mathbf{A})^{-1}\mathbf{A}^T\mathbf{b}},   
   \end{equation*} 
   where $\mathbf{A}$ and $\mathbf{b}$ are matrices containing anchor position and measured distances information:
   \[
   \mathbf{A} = \begin{Bmatrix}
   -2x_0 \kern-6pt & -2y_0 \kern-6pt & -2z_0 \kern-6pt & 1 \\
   \vdots \kern-6pt & \vdots \kern-6pt & \vdots \kern-6pt & 1 \\
   -2x_n \kern-6pt & -2y_n \kern-6pt & -2z_n \kern-6pt & 1 \\
   \end{Bmatrix};
   \mathbf{b} = \begin{Bmatrix}
   {{d}^{\circ}}^2_{k,0} - \Vert\mathbf{p}^{\circ}_0\Vert^2 \\
   \vdots   \\
   {{d}^{\circ}}^2_{k,n} - \Vert\mathbf{p}^{\circ}_n\Vert^2 \\
   \end{Bmatrix}.
   \]
   \revise{}{In many cases, \gls{ls} performance can be improved by incorporating a weight matrix $\mathbf{W}$ to account for known anchor error power. The \gls{wls} solution can be expressed as:
   \begin{equation*}
   [x_k,y_k,z_k,\Vert\mathbf{p}_k\Vert^2]^\mathrm{T} = \mathrm{(\mathbf{A}\mathbf{W}^T\mathbf{A})^{-1}\mathbf{A}^T\mathbf{W}\mathbf{b}}.  
   \end{equation*} }
   \revisefinal{}{As introduced earlier, the parameters \( \sigma_{\mathbf{p},n} \) and \( \sigma_d \) can be jointly represented by a single signal metric, \( \sigma_{M,n} \). This bound can be built using a pre-determined reference table. Given that \( \sigma_{M,n} \) is known, the weight matrix \( \mathbf{W}\) is defined as:  $\mathbf{W} = [1/\sigma_{M,1}\dots1/\sigma_{M,n}]^\mathrm{T}$}.

   \textbf{LN-1 based estimation:} Moreover, such a localization problem can be formulated as a plane fitting problem:
   \begin{align}
   \begin{split}
      &\min\limits_{u,w}\Vert w\Vert_1,\\
      &\text{subject to } \mathbf{A}u-W = \mathbf{b},\nonumber
   \end{split}
   \end{align}
   where the objective is to find a 4D plane $W = f(x, y,z)$ that fits the measurements $\mathbf{A}$ and $\mathbf{b}$. The coefficients of the plane are $u = [x_k,y_k,z_k,\Vert\mathbf{p}_k\Vert^2]^{\mathrm{T}}$. The optimization can be performed by minimizing the $l$1 norm-based distance metric, this localization technique is efficient and robust as demonstrated in \cite{Ragmalicous}. 
   \revise{}{The position coordinates $u$ can be solved iteratively using $l_1$-norm solvers such as \gls{fista}. The original \gls{fista} proposed in \cite{FISTABeck2009}, widely used in localization applications, alongside \gls{lp} solvers, can handle localization in wireless networks even with a small number of anchors \cite{LPlocXukun2016}. Similarly, \gls{admm} efficiently solves the target localization in wireless networks where measurements come from both \gls{nlos} and \gls{los} paths \cite{admmHe2021} .}
   
   \textbf{GD based estimation:} Eq.~\eqref{eq:optimize} can be also reformulated as $\mathbf{p}_k =\arg\min\limits_{[x,y,z]}f(x,y,z)$. By applying gradient descent to cost function $f(x,y,z)$, we are able to estimate the position $\hat{\mathbf{p}}_k$ of $u_k$ iteratively. At the $i_\mathrm{th}$ iteration, the negative gradient $g^{i}$ and position $\hat{\mathbf{p}}_k$ can be calculated
   \begin{equation}
		g^{i} = - \nabla_{(x,y,z)} (f(x,y,z))\vert_{(x=\hat{x}_k^{i-1},y=\hat{y}_k^{i-1},z=\hat{z}_k^{i-1})},\nonumber
   \end{equation}
   \begin{equation}
		\hat{\mathbf{p}}^i_k = \hat{\mathbf{p}}^{i-1}_k  + \alpha^{i} \times \frac{g^{i}}{||g^{i}||}.\nonumber
   \end{equation}
   where $\hat{\mathbf{p}}^{i-1}_k$ is the estimated position at the $(i-1)_{th}$ iteration. $\alpha^{i}$ is the step size at the $i_{th}$ iteration. $\alpha^{i}$ can be adjusted by discount factor $\beta$ to prevent over-descending. 
   Similarly, a convergence threshold $\theta_{\mathrm{GD}}$ can be applied.

  
   \subsection{Performance under non-uniform spatial distribution}\label{subsec:perunvenD}

    To compare the localization performance of the \gls{ls}, \gls{ln-1}, and \gls{gd} techniques in a cooperative localization scenario, we considered a setup where the target \gls{uav} $u_k$ is positioned amidst anchor \gls{uav}s distributed within a \revise{}{spherical cap-like area, as depicted in Fig.~\ref{fig:altitudedemo}}.
    \begin{figure}[!htbp]
		\centering
		\revisebox{\includegraphics[width=0.75\linewidth]{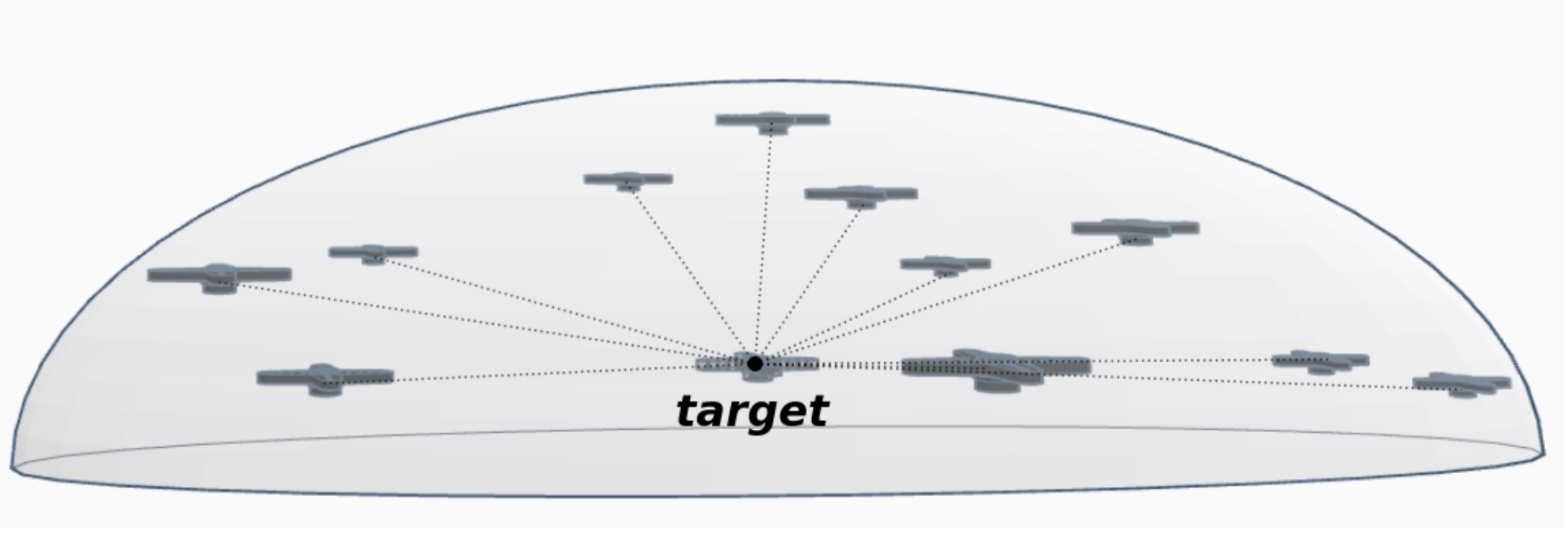}}
        \vspace{-3mm}
		\revise{}{\caption{Visualization of \glspl{uav} distribution}}
		\label{fig:altitudedemo}
   \end{figure}
    \revise{}{The $R_z$ varied from $3$ to $29$, and the maximum distance $d_{R}$ between the target and anchors is set to $50\meter$, therefore $R_x$ and $R_y$ will corresponding change}. We adopted an \gls{rssi} error model, as depicted in Fig.~\ref{fig:RSSImean2}, and configure the simulation with the following settings: $N = 30$, $[\sigma_{\mathrm{min}}^p, \sigma_{\mathrm{max}}^p] = [0.1,3]$. The parameters for each localization technique have been determined through our empirical observations to achieve optimal performance. Specifically, for LN-1, we set $K_{\mathrm{max}} = 300$, $\rho = 0.3$, and $\theta_{\mathrm{LN}} = 1\times 10^{-3}$. For the GD technique, the parameters were $K_{\mathrm{max}} = 50$, $[\alpha_0, \beta] = [1.5, 0.8]$, and $\theta_{\mathrm{GD}} = 1\times 10^{-5}$.
    Each estimation was repeated 100 times to account for randomness in the results, and the average error simulations are presented in Fig.~\ref{fig:errorsp}. 
   \begin{figure}[!htbp]
		\centering
		\revisebox{\includegraphics[width=0.69\linewidth]{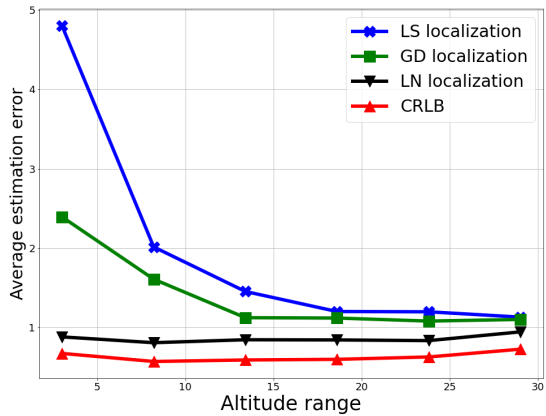}}
		\revise{}{\vspace{-1mm}\caption{Localization error over spatial distribution (scaled CRLB is used to access the estimation error)}}
		\label{fig:errorsp}
	\end{figure}
    
   The simulation results revealed that both the \revise{}{\gls{wls}}, \gls{ls} and \gls{ln-1} techniques are susceptible to the non-uniform distribution of \gls{uav}s. 
   In contrast, \gls{ln-1}-based localization exhibited lower sensitivity to this non-uniformity. In an effort to enhance robustness in localization and address security concerns, \gls{ln-1} \revise{}{technique based on \gls{admm}} modifies measurements while violating the linear constraint. This modification mitigates the impact of non-uniform features in the measurements. \revise{}{}\revise{}{Considering computational costs, \gls{ln-1} with \gls{admm} introduces a higher burden. While it addresses the non-uniform distribution issue, this comes at the expense of significantly greater complexity compared to \gls{ls}-based localization.} The performance of \gls{gd}-based localization appeared to be independent of the spatial distribution, as the gradient is adapted at each iteration based on the collected data. \gls{gd}-based localization is well-suited for scenarios involving multiple \gls{uav}s, where the distribution of \gls{uav}s may be non-uniform across all three dimensions.

    \subsection{The MAGD algorithm}
    The primary object of \gls{gd} is to minimize the loss function: 
    \begin{equation}
         f(x,y,z) = \sum_{n= 0,n\neq k}^{N-1}\left\vert\Vert\mathrm{p^{\circ}}_n-{\mathbf{p}}_k\Vert - {d}^{\circ}_{k,n}\right\vert.
    \end{equation}
    While acknowledging the presence of errors in all measurements and the actual location $p_k$ is unknown, obtaining the precise mathematical representation of $f(x, y, z)$ proves challenging. It is understood that $f(x, y, z)$ is a conventional convex function, typically threshold-ed by the \gls{crlb} and its maximum estimation loss $\mathbb{L}^m$ 
    \begin{equation*}
         \text{CRLB} < f(x,y,z) \leqslant \mathbb{L}^m, 
    \end{equation*}
    and also exemplified in Fig.~\ref{fig:convexCRLB}.
    \begin{figure}[!htbp]
		\centering
		\includegraphics[width=0.81\linewidth]{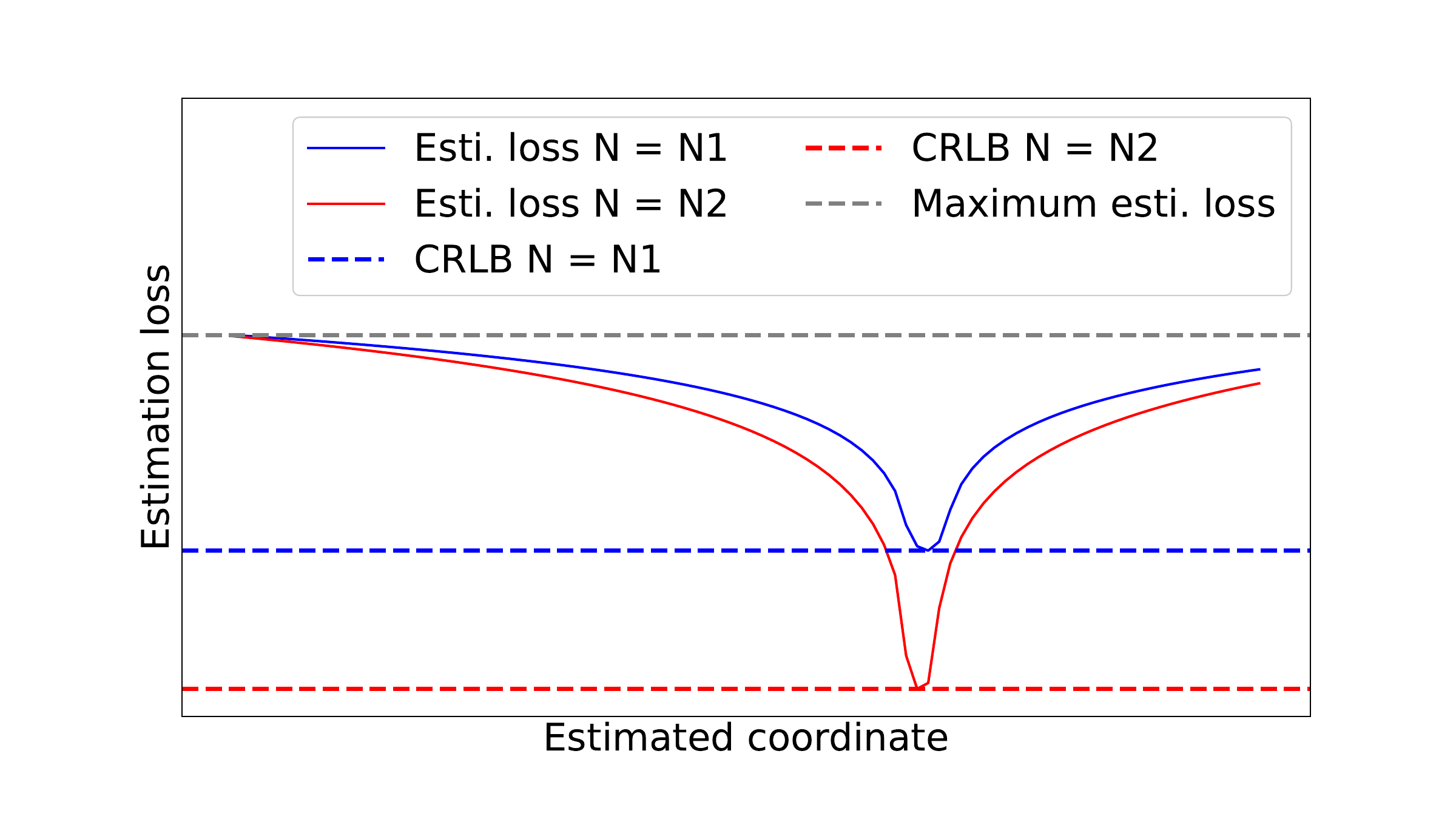}
        \vspace{-2mm}
		\caption{Estimation loss while estimated coordinate varies. It is assumed that $N_2 > N_1$, thus the loss function $N_2$ has a smaller minimum.}
		\label{fig:convexCRLB}
	\end{figure} In the context of continuous estimation scenarios, the previous estimates can be leveraged for better convergence and mitigate computational costs. Therefore, the maximum estimation loss at time step $t$ can be summarized: 
    \begin{equation}\label{eq:losovertime}
         \mathbb{L}_{t}^m = \mathrm{V}\mathrm{\Delta T} + \mathbb{L}_{t-1}, 
    \end{equation}
    where $\mathrm{V}$ is the speed of target \gls{uav} and $\mathrm{\Delta T}$ is the interval between two consequent estimation; $\mathbb{L}_{i-1}$ is the inherited loss from previous estimation. In the stochastic setting, where the convexity and smoothness of the loss function are unknown, choosing the optimal learning rate is not straightforward. One often tested learning rate schedule takes the form \cite{wu2018wngrad}: 
    \begin{equation}\label{eq:initiLR}
         \alpha_{t,k} = \frac{\mathbb{L}_{t,k}^m}{kG}, 
    \end{equation}
    where $k$ is the iteration and $G$ is the average of gradients. The loss overtime $\mathbb{L}_t$ typically follows a decreasing trend and converges above the \gls{crlb} when the speed $\mathrm{V}$ undergoes without large fluctuations. With small $N$, average gradients become less reliable toward optima. Despite common strategies like gradient normalization and \gls{sgd}, they aren't feasible here. Thus, a larger $\alpha$ is warranted when $N$ is small. Summarizing Eqs~.(\ref{eq:losovertime})-(\ref{eq:initiLR}), for a deterministic learning rate, $\alpha$ should be monotonically decreasing with respect to $N$, the time step $i$, and the iteration $k$. Meanwhile, $\alpha$ shall be monotonically increasing to the speed.   
    
    Another common and effective option in determining $\alpha$ is to start with a constant learning rate $\alpha_0$ which gives good empirical convergence and gradually decreasing it with a discount factor $\beta$ in subsequent cycles until convergence. Introducing a learning rate reduction Theorem and a convergence threshold can simplify determining the learning rate, removing the need to explicitly consider the parameter $k$. Combining the methodologies introduced above, a \gls{magd} algorithm \revise{}{(depicted in Fig.~\ref{fig:MAGDdiagram})} can be designed step-by-step to address these deterministic factors.
    \begin{itemize}
     \item \textbf{Step 1:} Initialize the learning rate $\hat{\alpha}$ at time step $t=0$. The loss $\mathbb{L}^m_{t=0}$ is typically very large, considering $\alpha$ to be monotonically decreasing with $N$.
     Set the learning rate: 
     {
     \begin{align*}
     \textbf{if} \quad &t = 0 \nonumber,\\
     &\hat{\alpha} = \max(\frac{\epsilon^{\mathrm{max}}_{\mathrm{t_0}}}{N}, \epsilon^{\mathrm{min}}_{\mathrm{t_0}}), \notag
     \end{align*}}

     where $\epsilon^{\mathrm{max}}_{\mathrm{t_0}}$ and $\epsilon^{\mathrm{min}}_{\mathrm{t_0}}$ are the maximum and minimum learning rate thresholds, determined empirically based on the range of $\mathbb{L}^m_{t=0}$.
     \item \textbf{Step 2:} Within each estimation, $\hat{\alpha}_{t,k}$ shall be reduced if the estimation process is over-descended, i.e. the loss at $k_{th}$
     iteration is larger than at $(k-1)_{th}$ iteration:
     {
     \begin{align*}
     \textbf{for} \quad &k = 1:K, \nonumber\\
     &\bar{d}_n = \Vert \hat{\mathbf{p}}-\mathbf{p}^{\circ}_n\Vert,\\
     &{D}_k = \frac{1}{n} *\sum\limits_{u_n\in\mathcal{U}}(\hat{d}_n - d^{\circ}_n+\mu^f_n)*w_n^f,\\
     &\textbf{if}\quad \bar{D}_k > \bar{D}_{k-1},\\
     &\quad\quad \hat{\alpha}_{t,k} =  \hat{\alpha}_{t,k-1}\beta_1.
     \end{align*}}
    $\bar{d}_n$ represents distance estimates based on the current position $\hat{\mathbf{p}}$. $\bar{D}_i$ is the average distance difference between $\bar{d}_n$ and the measured distance $d^{\circ}_n$, with weight factors $w_n^f$. $\beta_1$ denotes the discount factor for each estimate.
     \item \textbf{Step 3:} After each estimation beyond time step $t=0$, $\hat{\alpha}_t$ shall be reduced if the ongoing estimation proves stable, as inferred from the convergence of $\mathbb{L}^m_{t}$. While direct estimation of $\mathbb{L}^m_{t}$ is impossible, a viable alternative is to leverage $\bar{D}_t$, representing the average distance difference outputted after \textbf{Step 2}, as an indicator of $\mathbb{L}^m_{t}$.
     {\small
     \begin{align*}
     \textbf{for} \quad &t = 1:T \nonumber,\\
     &\Bar{\Bar{D}} = \frac{1}{t}\sum\limits_{t=1}^{t}\bar{D}_t,\\
     &\textbf{if}\quad\frac{\vert\bar{D}_t - \Bar{\Bar{D}}\vert}{\bar{\bar{D}}} <= 0.3,\\
     &\quad\quad {\alpha}_{t+1} =  \max(\hat{\alpha}_{t}-\beta_2,\frac{\epsilon^{\mathrm{min}}_{\mathrm{t_0}}}{n}).
     \end{align*}}
     \item \textbf{Step 4:} If $\mathrm{V}\to 0$, the estimation remains robust, with the learning rate reduced to a minimum threshold in \textbf{Step 3}. Consider the case harness the robustness the most: $\mathrm{V}$ increases and $\hat{\alpha}_t$ is too small for the current setup.
     {\small
     \begin{align*}
     \textbf{for} \quad &t = 1:T \nonumber,\\
     &\mathrm{V}_t=\Vert\hat{\mathbf{p}}_t - \hat{\mathbf{p}}_{t-1}\Vert,\\
     &\bar{\mathrm{V}} = \frac{1}{t}\sum\limits_{t=1}^{t}\mathrm{V}_t,\\
     &\rho= \frac{1}{t-\phi}\sqrt{\sum\limits_{t=t-\phi}^{t}\frac{\bar{D}_t\bar{\mathrm{V}}}{\bar{\bar{D}}\mathrm{V}_t}},\\
     &\textbf{if} \quad \rho > 1.3,\\
     &\quad\quad{\alpha}_{t+1} = {\alpha}_{t}\rho.
     \end{align*}}
     In the given scenario, where $\mathrm{V}_t$ represents the current estimated speed based on position estimation, and $\bar{\mathrm{V}}$ denotes the average speed, the ratios $\frac{\bar{D}_t}{\bar{\bar{D}}}$ and $\frac{\mathrm{V}_t}{\bar{\mathrm{V}}}$ offer insights into the variations relative to their respective averages. Recall Eq.~\eqref{eq:losovertime} , $\mathbb{L}_{t}^m$ is linear to $\mathrm{V}$ and $\bar{D}_t$ is an indicator to $\mathbb{L}_{t}^m$, if $\frac{\bar{D}_t}{\bar{\bar{D}}}$ and $\frac{\mathrm{V}_t}{\bar{\mathrm{V}}}$ increase in parallel, it indicates a configuration change, with the estimation still tracking the actual values effectively. If the increase in $\frac{\bar{D}_t}{\bar{\bar{D}}}$ surpasses that of $\frac{\mathrm{V}_t}{\bar{\mathrm{V}}}$, it suggests that the current $\hat{\alpha}$ is too small to effectively reduce the loss. In such cases, enlarging $\hat{\alpha}$ becomes necessary to better suit the current configuration. To capture trends and mitigate fluctuations, we use the square root and introduce a smoothing window $\phi$ to enhance the robustness of the estimation process. \revise{}{It should be noted that advanced \glspl{uav} equipped with velocity meters, accelerometers, and direction sensors could enhance \gls{magd} performance. These measurements can further improve localization, particularly when anchor information is less reliable, potentially exceeding the \gls{crlb} in poor configurations. However, this work focuses on a minimal system where only position data is available.}
    \end{itemize}   
    \begin{algorithm}[!htbp]
    \caption{\gls{gd} algorithm}
    \label{alg:MAGD}
    \scriptsize
    \DontPrintSemicolon
        {    \For {$n = 1:N$}{get $\mathbf{p}_n^{\circ}, \sigma_{\mathbf{p},n}, d^{\circ}_{n}$ from $u_n$ 
        
                $w_n^f = \frac{\text{max}(\sigma_n^f)}{\sigma_n^f}\quad $; $\hat{d}_n = \Vert \hat{\mathbf{p}} - \mathbf{p}_n^{\circ}\Vert$ }     
             $G^i \gets \sum\limits_{u_n\in\mathcal{U}}\frac{(\hat{\mathbf{p}}-\mathbf{p}^{\circ}_n)*w_n^c }{\hat{d}_n}*(\hat{d}_n - d^{\circ}_n+\mu_{c,n})$ \tcp*{Gradient}
             update: $\hat{\mathbf{p}} \gets \hat{\mathbf{p}} + m*\hat{\mathbf{p}} + \frac{\hat{\alpha}}{n}* \frac{G^i}{\Vert G^i \Vert} $; }
   \end{algorithm}
   \begin{figure}[!htbp]
		\centering
		\revisebox{\includegraphics[width=0.90\linewidth]{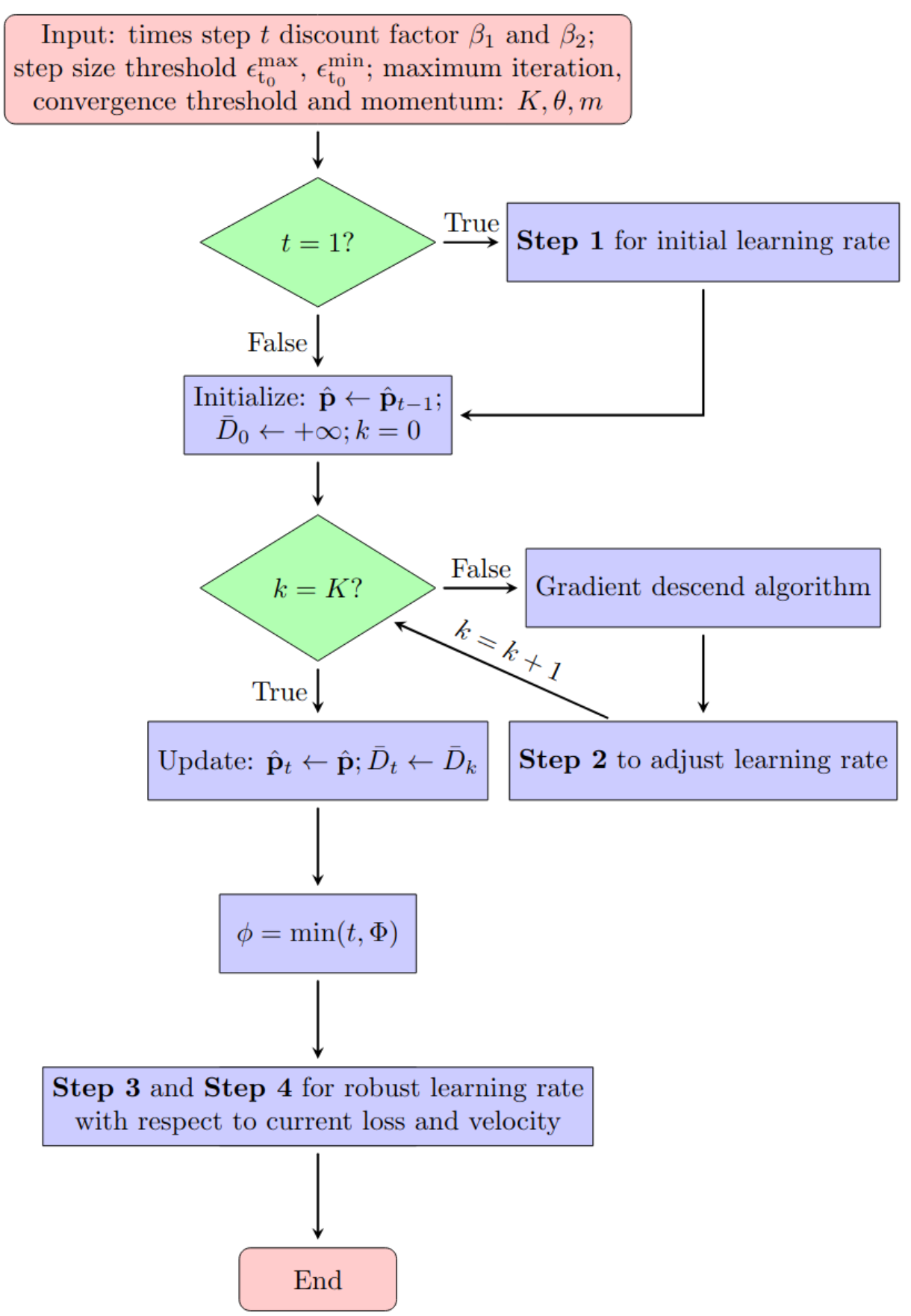}}
		\revise{}{\vspace{-1mm}\caption{Flowchart of \gls{magd} algorithm}}
		\label{fig:MAGDdiagram}
   \end{figure}
   \subsection{Accuracy estimation of cooperative localization system}\label{subsec:evaluofmu}
   To assess the robustness and accuracy of \gls{magd}, we conducted simulations varying anchor \gls{uav} numbers and initial step sizes ${\alpha_0}$ (${\alpha_0}$ will be reduced by $\beta$, it is the same approach we used in Subsec.\ref{subsec:perunvenD}). Subsequently, we compared the simulation results based on \gls{magd}. To emulate real-world mobility, we introduced periodic changes in the movement speed of $u_k$. The anchor \gls{uav}s, denoted as $u_n$, were randomly initialized within a spherical area surrounding the target \gls{uav} $u_k$. Details of the system configuration are provided in Tab.\ref{tab:setup1}.

   \begin{table}[!htbp]
		\centering
        \scriptsize
		\caption{Simulation setup 1}
		\label{tab:setup1}
		\begin{tabular}{>{}m{0.2cm} | m{1.6cm} l m{3.7cm}}
			\toprule[2px]
			&\textbf{Parameter}&\textbf{Value}&\textbf{Remark}\\
			\midrule[1px]
			
            
			  &$n_a$&$5\sim40$& Anchor \gls{uav}s\\

			
			&$\sigma^2_{\mathbf{p},n}$&$\sim\mathcal{U}(0.1,3.0)$& Position error power / \si{\meter^2}\\
			&$V$&$\sim\mathcal{U}(0.6,3.4)$& Travel speed \si{\meter/s}\\
            & $T$ & 50 s &  Simulation time\\ 
            \midrule[1px]
            \multirow{-7}{*}{\rotatebox{90}{\textbf{System}}}
            &$[\epsilon^{\mathrm{max}}_{\mathrm{t_0}},\epsilon^{\mathrm{min}}_{\mathrm{t_0}}]$&[50,5]& Step size thresholds\\ 
            &$[\beta_1,\beta_2]$&$[0.5,0.05]$& Discount factors\\
			 
            &$m$&$1\times 10^{-5}$& Momentum\\
            &$\theta$&$1\times 10^{-8}$& Convergence threshold\\
			\multirow{-5}{*}{\rotatebox{90}{\textbf{MGAD}}}
            & $K$ & 30 &  Maximum iteration\\
            \bottomrule[2px]
		\end{tabular}
	\end{table}
    \begin{figure}[!htbp]
		\centering
		\includegraphics[width=0.72\linewidth]{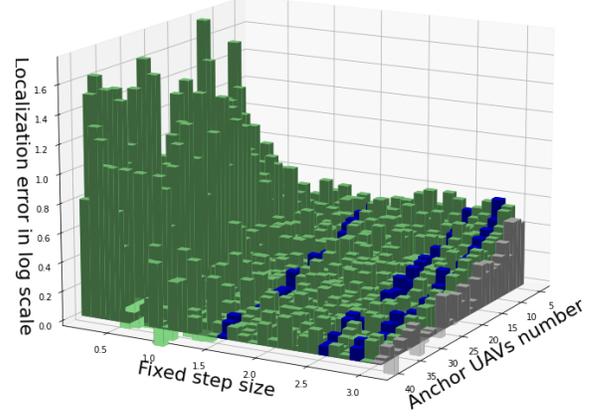}\vspace{-2mm}
		\caption{Localization error of fixed step sizes and \gls{magd} (3 best step sizes are marked in blue and \gls{magd} in gray)}
		\label{fig:MAGDaccuracy}
	\end{figure}
     The simulation results in Fig.~\ref{fig:MAGDaccuracy} \revise{}{and Tab.~\ref{tab:simuMAGD}} reveal average errors from 50 estimates. For the best fixed step sizes, \revise{}{$\alpha = (1.5, 2.5, 2.8)$} yielded average localization errors of $(1.63,\text{m}, 1.67,\text{m}, 1.66,\text{m})$ across all instances of $n_a$. In comparison, \gls{magd} achieved an average error of 1.47,\text{m}, outperforming fixed $\alpha^0$ approaches. Step sizes within the range of 1.4 to 3.0 demonstrated reliable performance. However, identifying  this range can be challenging in practical applications. By adjusting $\alpha^0$ to different scenarios, our approach efficiently ensured accurate and robust position estimation. 
    \begin{table}[ht]
    \centering
    \revise{}{\caption{Average localization error (meters) of $\alpha = [0.1, 2.9)$; last element marked in gray is from \gls{magd}}}
    \revisebox{
   \begin{tabular}{|c@{\hspace{0.15cm}}|c@{\hspace{0.15cm}}|c@{\hspace{0.15cm}}|c@{\hspace{0.15cm}}|c@{\hspace{0.15cm}}|c@{\hspace{0.15cm}}|c@{\hspace{0.15cm}}|c@{\hspace{0.15cm}}|c@{\hspace{0.15cm}}|c|}
   \hline
   16.44 & 6.68 & 5.29 & 3.49 & 2.99 & 2.40 & 2.21 & 2.00 & 1.72 & 1.78 \\ \hline
   1.68  & 1.69 & 1.72 & 1.68 & 1.63 & 1.72 & 1.68 & 1.73 & 1.70 & 1.70 \\ \hline
   1.71  & 1.68 & 1.72 & 1.67 & 1.67 & 1.73 & 1.72 & 1.66 & 1.72 & \cellcolor{gray!30}1.47 \\
   \hline
   \end{tabular}\label{tab:simuMAGD}}
   \end{table}

    \section{Attack Paradigm and attack detection system}\label{attack}
    \subsection{Attack paradigm}{\label{attackpara}}
    \gls{uav}s cooperative localization faces diverse potential attacks that can be systematically categorized. These attacks span different aspects, each presenting unique challenges to the accuracy and reliability of localization systems. 
    
    \textbf{Jamming mode}:
    In the jamming mode, attackers jam the beacon signal of $u_n$ to introduce large distance estimation errors and induce a wrongly received $\mathbf{p}(n)$ and $\sigma_{\mathbf{p},n}$. The received signal can be modeled as
    \begin{equation*}
      \mathbf{r}_{\mathrm{J}} \leftarrow[\mathbf{p}^{\circ}_n + \mathrm{\tilde{p}}_n, ({d}^{\circ}_{k,n}+\tilde{d}_{k,n})^{+}, \sigma_{\mathbf{p},n} + \tilde{\sigma}_{\mathrm{J}}].
    \end{equation*}
     Take the simplification $\mathrm{\tilde{p}}_n\sim\mathcal{N}^3\left(0,\tilde{\sigma}_{\mathrm{J}}^2/3\right)$, $\tilde{d}_{k,n}\sim\mathcal{U}\left(0,\tilde{\sigma}_{\mathrm{J}}^2\right)$, where $\tilde{\sigma}^2_{\mathrm{J}}$ is the jamming power.

    \textbf{Bias mode}:
    In the bias mode, attackers hijack some of the \gls{uav}s to erroneously report its position with a position bias to mislead others. In this scenario, the received signal can be modeled as 
    \begin{equation*}
    \mathbf{r}_{\mathrm{B}} \leftarrow [\mathbf{p}^{\circ}_n + \tilde{\mathrm{B}},{d}^{\circ}_{k,n},\sigma_{\mathbf{p},n}], 
    \end{equation*}
    where $\tilde{\mathrm{B}}$ denotes position bias.

   \textbf{Manipulation mode}:
    In the manipulation mode, attackers hijack some of the \gls{uav}s to report its position with an extra error, simultaneously modifying its $\sigma_{\mathbf{p},n}$ to be extremely small for the intention of manipulating $w_n^f$. The received signal can be concluded as 
    \begin{equation*}
    \mathbf{r}_{\mathrm{M}} \leftarrow [\mathbf{p}^{\circ}_n+\mathrm{\tilde{p}}_n, {d}^{\circ}_{k,n}, 1/\tilde{\sigma}_{\mathrm{M}}], 
    \end{equation*}
    where $\mathrm{\tilde{p}}_n\sim\mathcal{U}\left(0,\tilde{\sigma}^2_{\mathrm{M}}/3\right)$, $\tilde{\sigma}^2_{\mathrm{M}}$ is the manipulation index.
    
    In dynamic scenarios, attack orchestration is categorized as:

    \textbf{Global random attack:}
    Under the global random attack strategy, all malicious UAVs are uniformly distributed and randomly target nearby UAVs. The objective here is to degrade position estimation on a global scale and potentially evade existing attack detection systems, as proposed in \cite{HKZ+2023trustawareness}.

    \textbf{Global coordinated attack:}
    Similar to the global random attack, the global coordinated attack involves malicious \gls{uav}s coordinating their attacks within a specific time frame. This strategy also aims to disrupt position estimation globally. 

    \textbf{Stalking strategy:}
    The stalking strategy entails all malicious \gls{uav}s following a specific victim \gls{uav} and consistently launching attacks against it. While this strategy may not have a widespread impact on global estimation accuracy, its focus is on targeting the victim \gls{uav}.

    \subsection{Anomaly detection and reputation propagation mechanism}
    Considering the previously mentioned attack schemes, a resilient attack detection algorithm should reduce the trustworthiness of suspicious sources. To achieve this, a reputation weight $r_n$ can be used in Step 2 and line 4 of Alg.~\ref{alg:MAGD}:
    \begin{equation}\label{eq:MAGDref}
    \begin{split}
     G^i &\gets \sum\limits_{u_n\in\mathcal{U}}\frac{(\hat{\mathbf{p}}-\mathbf{p}_n^{\circ})*w_n^f*r_n }{d_n}*(d_n - d^{\circ}_n), \\
     \bar{D}_i &\gets \frac{1}{n} *\sum\limits_{u_n\in\mathcal{U}}(\hat{d}_n -   d^{\circ}_n+\mu_{n}^f)*w_n^f*r_n. \nonumber
    \end{split}
    \end{equation}
    Our method in Algorithm~\ref{alg:TDAD} detects anomalies via the \gls{cdf}. In lines 9-10, $u_k$ computes the estimated distance error, $\hat{\mathrm{\mathcal{E}}}_n$, using data from $u_n$.

    To address potential manipulation attacks, lines 12 confine $\sigma_n^f$ to a predetermined minimum position error power, denoted as $\sigma_{\mathrm{min}}^p$. The algorithm subsequently scrutinizes the cumulative density $\xi_n$ against a pre-established probability threshold $\epsilon^t$ to discern the attack behavior of $u_n$. Based on these findings, the reputation weight undergoes an update. This updated $r_t^n$ takes into account its previous value $r_{t-1}^n$, incorporates a forget factor $\gamma$, and the corresponding reward or penalty. Finally, $r_t^n$ is thresholded to remain within the range of [0,1]. The underlying objective of this approach is to counter coordinated attacks effectively. It enables $r_t^n$ to gradually recover when attacks from $u_n$ become less frequent, all while preserving the accuracy of cooperative localization.
     
    \begin{algorithm}[!htbp]
    \caption{\Acrlong{tad}}
    \label{alg:TDAD}
    \scriptsize
    \DontPrintSemicolon
    Input: Reward $\lambda_r$ and penalty $\lambda_p$, 
			 forget factor $\gamma$ and confidence threshold $\epsilon^t$\\
    Output: $r^n_t$
    
    Initialize: $r^n_{t=1} \gets 1$
    \SetKwProg{Fn}{Function}{ is}{end}
    \Fn{\textsc{}$(t = 1:T)$} {
        get current position estimate $\hat{\mathbf{p}}_t$ 
        
        {    \For {$n = 1:N$}{get $\sigma_{\mathbf{p},n}, d^{\circ}_n, \mathbf{p}^{\circ}_n$ from $u_n$
        
                   get $\mu^f_n,\sigma^f_n$
                   
                   
                   $\hat{d}_n \gets \Vert \hat{\mathbf{p}} - \mathbf{p}_n\Vert$ 
                   
                   $\hat{\mathrm{\mathcal{E}}}_n\gets \vert \hat{d}_n - d^{\circ}n + \mu_n^f\vert$ \tcp*{Distance error of $u_n$}
                   
                   calculate CDF $\xi_n$, 
                   
                    $\sigma_n^f \gets \max(\sigma_n^f,\sigma_{\mathbf{p},min})$\\
                   $\xi_n \gets P_{\hat{\mathrm{\mathcal{E}}}_n} \left[\hat{\mathrm{\mathcal{E}}}_n~\vert~\mu_n^f,(\sigma_n^f)^2\right] $ 
                   
                   \uIf{$\xi_n > \epsilon^t$}{
                
                $\hat{r}_n \gets \lambda_r$ \tcp*{Assign reward}
               }\Else{
               
               $ \hat{r}_n \gets \lambda_p $ \tcp*{Assign penalty} }

             update $ r_n(t) $

             $ r_t^n \gets \gamma*(r^n_{t-1} + 1) - 1 + \hat{r}_n $

             $ r_t^n \gets \min(1,\max(0,r_t^n) $
             } 
    }}
    \end{algorithm}
    
    \begin{equation*}
      \tilde{r}_{k,n}  = \frac{\sum\limits_{m\notin{k,n}}{r}_{k,m}*{r}_{m,n}}{\sum\limits_{m\notin{k,n}}{r}_{k,m}} , \tilde{\tilde{r}}_{k,n} = \frac{\mathrm{F_p}(\tilde{r}_{k,n})+{r}_{k,n}}{2}.
    \end{equation*}
    While $u_k$ leverages the cooperative localization system, $u_1...u_n$ represent the accessible anchor \gls{uav}s, and ${r}_{k,n}$ signifies the local reputation. $u_m$ has uploaded its local reputation to the cloud, allowing access to the reputation ${r}_{m,n}$ from $u_m$ to $u_n$. Simultaneously, $u_k$ possesses a local reputation ${r}_{k,m}$ towards $u_m$. The uploaded reputation will be evaluated based on the local reputation ${r}_{k,m}$. A propagated reputation $\tilde{r}_{k,n}$ tends to favor \gls{uav}s with a good reputation with respect to $u_k$. The propagation function $\mathrm{F_p}$ (a convex function), designed to discriminate against already notorious $u_n$, is then applied. Subsequently, the mean of the local reputation and the propagated reputation is utilized in \gls{magd}.
    \subsection{Effectiveness analysis without anomaly detection}
    When subjected to falsified information, estimators encounter more complex conditions. The \gls{crlb} is consequently altered, as the usable information available to the estimator is manipulated. For \gls{crlb}, we consider two distinct scenarios: a lower bound case where no falsified information is involved, designated as \gls{crlb}-1, and an upper bound case where the estimator possesses no anomaly detection capability, designated as \gls{crlb}-2. The \gls{crlb} for jamming and manipulation modes can be understood as follows: As previously discussed, the jamming mode is capable of misleading distance estimation and introducing minor unbiased errors in position information. In contrast, the manipulation mode can inject significant unbiased position errors. Despite the difference, the estimator under those two modes can still be considered unbiased,\revise{}{~\cite{estimation1975Brad}, as the mean of possible estimates still satisfies $\mathbf{E}(\hat\theta) = 0$.} \revise{}{As demonstrated in Fig.~\ref{fig:FittedG}, position errors are transformed into distance estimation errors, and the modeled error in both distance and position estimates can be approximated by a Gaussian distribution.} Consequently, the modeled distance estimation error under attacks can be assumed to follow a Gaussian distribution with standard deviation $\sigma'_M$. Considering the set of \glspl{uav} providing accurate information as $\mathcal{U}_n$, and the set providing malicious information i.e. deliberately falsified information as $\mathcal{U}_{nm}$, respectively. The \gls{pdf} of measured distance can thus be expressed as:
    \begin{equation}\label{eq:fasifiedPDF}
    \begin{split}
    f(\mathbf{p}_k) &= \frac{(1-{p}_a)}{\sqrt{2\pi\sigma^2_M}} \exp\left\{\sum_{ n\in\mathcal{U}_n}\frac{-({d}^{\circ}_{k,n} - {d}_{k,n})^2}{2\sigma^2_M}\right\}  \\
    &+ \frac{{p}_a}{\sqrt{2\pi{\sigma^{'}_M}^2}} \exp\left\{\sum_{ n\in\mathcal{U}_{nm}}\frac{-({d}{'}_{k,n}^{\circ}- {d}_{k,n})^2}{2{\sigma^{'}_M}^2}\right\}.
   \end{split}
   \end{equation}
   ${d}{'}_{k,n}^{\circ}$ represents the modeled distance error when malicious UAVs provide falsified information. $\mathrm{p}_a$ denotes the probability of malicious information being injected. The \gls{fim} under attacks can be reformulated as follows: 
    \begin{equation}\label{eq:fimofattk}
        \tilde{\mathbf{I}}(h) = (1-{p}_a)\mathbf{I}(h) + {p}_a\mathbf{I}'(h).
    \end{equation}
   where $\mathbf{I}(h)$ represents the \gls{fim} of anchor \glspl{uav} providing accurate information, as described in Eq.~\eqref{eq:fim3D}. $\mathbf{I}'(h)$ denotes the falsified \gls{fim} from malicious \glspl{uav}. For bias attack, malicious \glspl{uav} will be providing position information with the same bias, thus the second part of Eq.~\eqref{eq:fasifiedPDF} can be replaced with \revise{}{\begin{equation}
    \cdots + \frac{{p}_a}{\sqrt{2\pi\sigma^2_M}} \exp\left\{\sum\limits_{ n\in\mathcal{U}_{nm}}\frac{-({d}_{k,n}^{\circ}+d^B_{k,n}- {d}_{k,n})^2}{2\sigma^2_M}\right\}, 
   \end{equation}
   where $d^B_{k,n}$ indicates the injected bias.}
    The \gls{crlb} was designed for unbiased estimators, thus it can't be directly applied while the information were biased. Combining Eq.~\eqref{eq:fasifiedPDF}, the \gls{pdf} will be linearly shifted towards the bias with probability $\mathrm{p}_a$. We consider \gls{crlb}-2 for the bias mode to be the sum of \gls{crlb}-1 and a linearly varied bias.
   We assign the following attack parameters:
   \begin{enumerate*}
   [label=\emph{\roman*)}]
   \item $\tilde{\sigma}^2_{\mathrm{J}} = 8$\reviseprev{}{,}
   \item $\tilde{\sigma}^2_{\mathrm{M}} = 400$\reviseprev{}{,}
   \item $\tilde{B} = [3,3,3]$\reviseprev{}{, and}
   \item ${p}_a$ ranging from $0.15$ to $0.75$, while the average number of anchor \glspl{uav} is $20$. 
   \end{enumerate*}

   The \gls{crlb}-1 and \gls{crlb}-2 of the three attack modes are depicted in Fig.~\ref{fig:crlb_eva}. Although direct comparison is challenging due to differing parameters for each attack mode, we can still draw the conclusions: The bias mode attack performance shows a linear relationship with the malicious information rate, in other words, proportional to adversary invested resources. Contrastingly, manipulation attacks exhibit an exponential relation, with a steep increase when the rate exceeds $55\%$. Fig\reviseprev{}{s}.~\ref{fig:bias_eva}--\subref{fig:jam_eva} demonstrate\reviseprev{d}{} the actual attack performances of three attack modes as well as coordinated and random strategies. In general, a coordinated strategy is more efficient than a random strategy. Meanwhile, the simulated attack performances of bias and manipulation attacks align with \gls{crlb}-2. The simulated attack performance of the jamming mode does not align with \gls{crlb}-2, due to the weighted approach we used, which significantly mitigated jamming attacks and stabilized localization.
   
   \begin{figure}[!t]
  \centering
  \subfigure[\label{fig:crlb_eva}]{\centering
  \reviseboxfinal{\includegraphics[width=0.44\columnwidth]{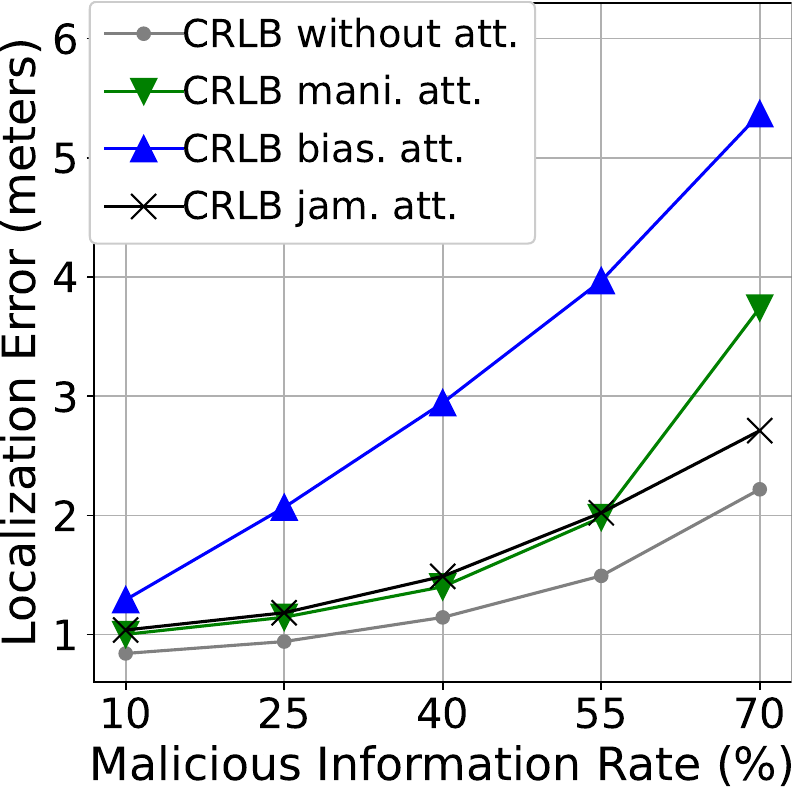}}}
  \hfill
  \subfigure[\label{fig:bias_eva}]{\centering
  \reviseboxfinal{\includegraphics[width=0.46\columnwidth]{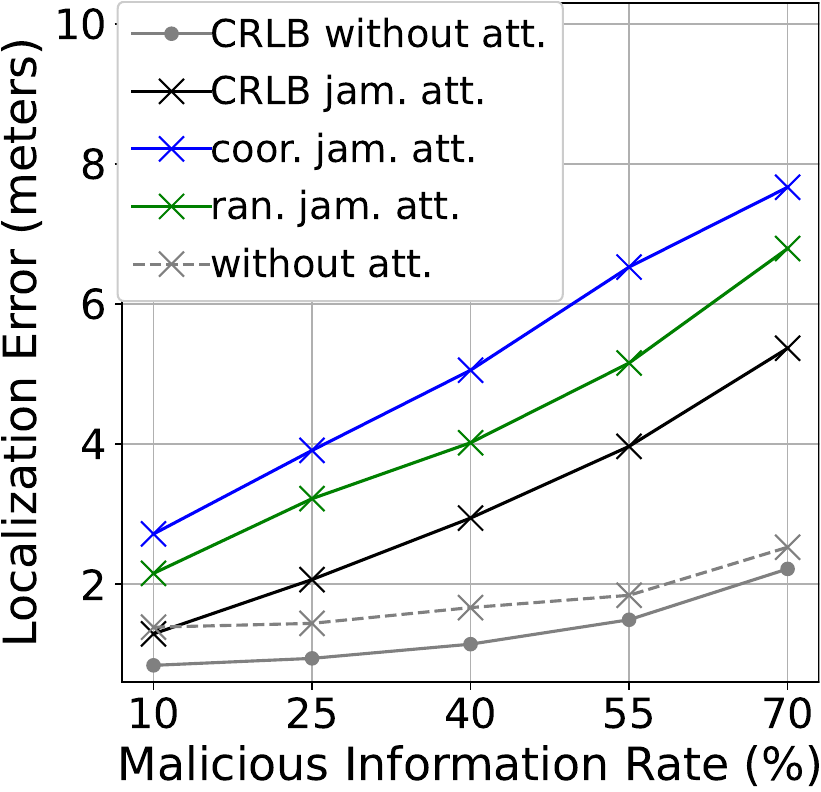}}}
  \\ 
  \subfigure[\label{fig:mani_eva}]{\centering
  \reviseboxfinal{\includegraphics[width=0.442\columnwidth]{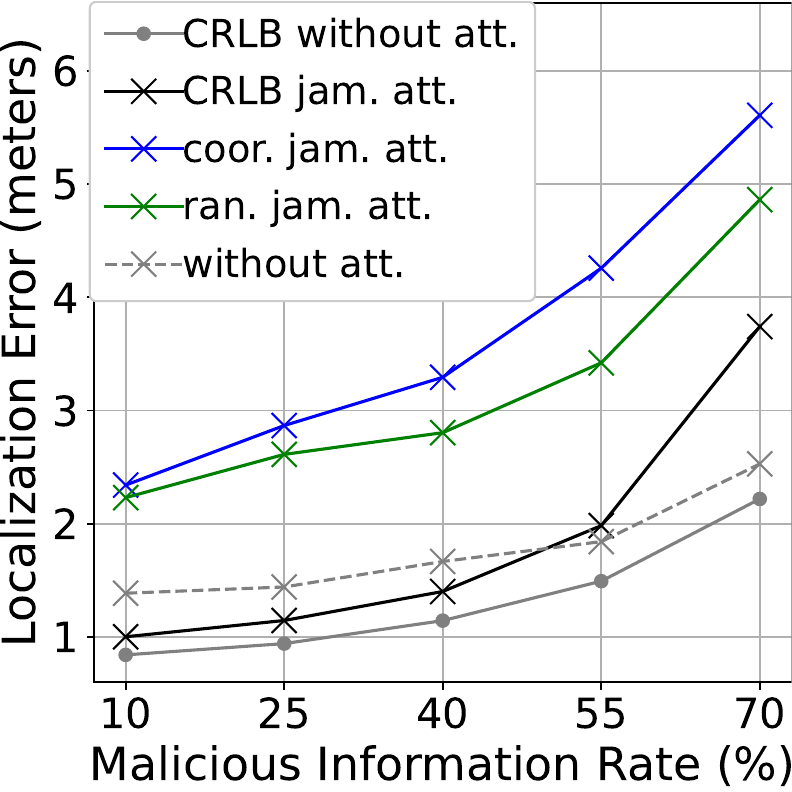}}}
  \hfill
  \subfigure[\label{fig:jam_eva}]{\centering
  \reviseboxfinal{\includegraphics[width=0.464\columnwidth]{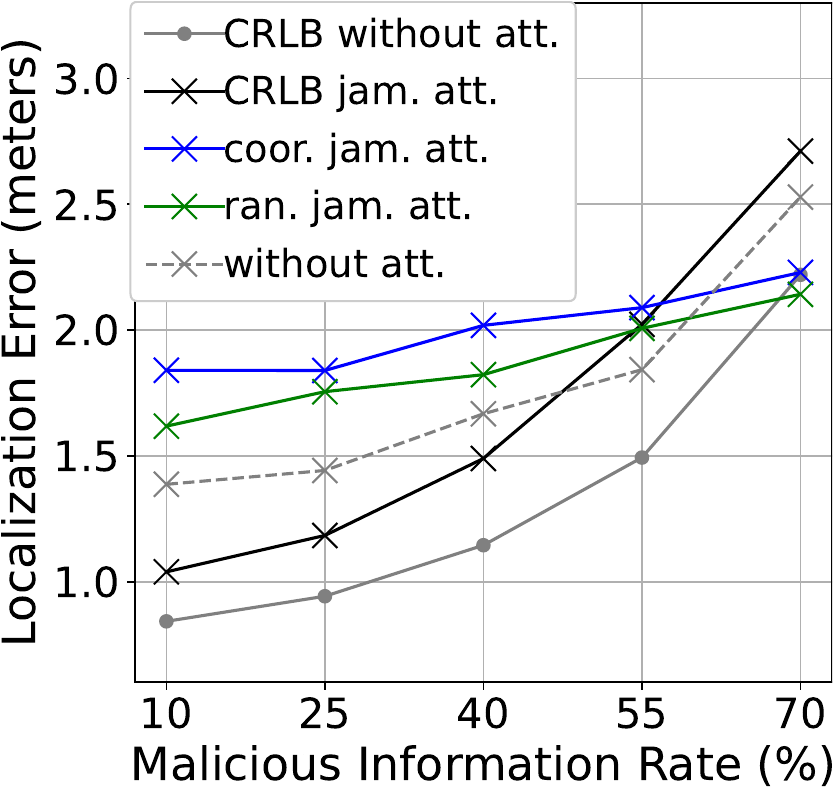}}}\vspace{-1mm}
  \caption{Comparison of \gls{crlb} and attack effectiveness: \subref{fig:crlb_eva} \gls{crlb}-2 of different attack modes, \subref{fig:bias_eva} effectiveness of bias attack, \subref{fig:mani_eva} effectiveness of manipulation attack, and \subref{fig:jam_eva} effectiveness of jamming attack, respectively}
  \label{fig:CRLB4mode}
  \end{figure}
  \subsection{Resource-constrained attacks under anomaly detection}\label{subsec:reconatk}
   Based on our analytical and simulation results, we can conclude that the effectiveness of attacks can be characterized as a function $\mathbf{E}(p_a,A_{t})$, where $p_a$ is the malicious information rate and $A_{t}$ denotes the attack parameter (including $\tilde{\sigma}^2_\mathrm{J}$, $\tilde{B}$ or $\tilde{\sigma}^2_\mathrm{M}$ when applicable). Attack parameters quantify the magnitude of deviation between falsified and true information. $\mathbf{E}$ has several characteristics, such as: 
   \begin{equation}\label{eq:atkcharacter}
   \begin{cases}
    \forall A_{t} \geqslant 0, \mathbf{E}(0, A_{t}) = 0,\\[1ex]
    \forall p_a \in [0,1], \mathbf{E}(p_a, 0) = 0,\\[1ex]
    \forall p_a \in [0,1] \cap A_{t} \in [0,+\infty), \frac{\partial \mathbf{E}}{\partial p_a} > 0, \\[1ex]
    \forall p_a \in [0,1] \cap A_{t} \in [0,+\infty), \frac{\partial \mathbf{E}}{\partial A_{t}} > 0.
    \end{cases}
    \end{equation}
    
     Similarly, the detection rate of an ideal anomaly detector can be described as a function $\mathbf{P}_d(p_a,A_{t})$. $\mathbf{P}_d$ has several characteristics: 
     \begin{equation}\label{eq:pdcharacter}
     \begin{cases}
     \forall p_a \in [0,1] \cap A_{t} \in [0,+\infty), \mathbf{P}_d \in (0,1),\\[1ex]
     \forall p_a \in [0,1] \cap A_{t} \in [0,+\infty), \frac{\partial \mathbf{P}_d}{\partial p_a} < 0, \\[1ex]
     \forall p_a \in [0,1] \cap A_{t} \in [0,+\infty), \frac{\partial \mathbf{P}_d}{\partial A_{t}} > 0.
    \end{cases}
    \end{equation}
   Eq.~\eqref{eq:pdcharacter} shows that as the attack parameter $A_{t}$ increases, the likelihood of successful detection improves. Conversely, as the proportion of false information injected into the system rises, detection accuracy tends to decline. Therefore, the effectiveness of attacks under anomaly detection can be written as:
    \begin{equation}\label{eq:atkeffectivesness1}
    \mathbf{E}_d(p_a,A_{t}) = \mathbf{E}(p_a,A_{t})\big(1-\mathbf{P}_d(p_a,A_{t})\big)
   \end{equation}
   $\mathbf{E}_d(p_a,A_{t})$ will then be \reviseprev{low-bounded}{lower bounded} with \gls{crlb} excluding all the falsified information, as well as the information from \reviseprev{miss-detected}{misdetected} \glspl{uav}, the \reviseprev{miss-detection}{false alarm} rate noted as $\mathbf{P}_f$, 
    \begin{equation}\label{eq:atkeffectivesness2}
    \mathbf{E}_d(p_a,A_{t}) \geqslant \max\Big(\mathbf{E}_d(p_a,A_{t}), \mathrm{CRLB}\big(N(1-p_a\mathbf{P}_d-\mathbf{P}_f)\big)\Big)
   \end{equation}
   In most scenarios, adversaries operate with limited resources, allowing them to manipulate only a small subset of \glspl{uav} within a multi-\gls{uav} system. Assuming $p_a$ is a constant \reviseprev{and $p_a < 0.5$}{below $0.5$} and anomaly detection is in place, one potential strategy for the adversary is to optimize attack parameters to maximize attack effectiveness.
   To find the optimal attack parameter, we can examine the derivative of $\mathbf{E}_d$ with respect to $A_{t}$.
   For simplicity, we write $\frac{\partial\mathbf{E}_d}{\partial A_t}$ as $\mathbf{E}'_d$, $\frac{\partial\mathbf{E}}{\partial A_t}$ as $\mathbf{E}'$, and $\frac{\partial\mathbf{P}_d}{\partial A_t}$ as $\mathbf{P}'_d$. The derivative of $\mathbf{E}_d$ with respect to $A_{t}$ is given by:

   \begin{equation*}\label{eq:firstderiavtive}
    \mathbf{E}'_d = \mathbf{E}'(1-\mathbf{P}_d) - \mathbf{P}'_d \mathbf{E}.
   \end{equation*}
   An optimal attack parameter exist when $\mathbf{E}'_d = 0$, we now have\reviseprev{,}{} 
   \begin{equation*}\label{eq:firstderiavtive1}
   \mathbf{E}'  = \mathbf{P}'_d \mathbf{E}+ \mathbf{P}_d \mathbf{E}',
   \end{equation*}
   \begin{equation*}\label{eq:firstderiavtive2}
   \mathbf{E}'  = \mathbf{P}'_d \int\mathbf{E}'+ \mathbf{P}_d \mathbf{E}'   .
   \end{equation*}
   Considering $\mathbf{E}'>0$,  we can divide both sides by $\mathbf{E}'$\reviseprev{}{:}
   \begin{equation*}\label{eq:firstderiavtive3}
   1  = \int \mathbf{P}'_d + \mathbf{P}_d.
   \end{equation*}
   Therefore, an optimal attack parameter exists when $\mathbf{P}_d = \frac{1}{2}$\reviseprev{}{.}
  \begin{figure}[!htbp]
  \centering
  \subfigure[\label{fig:concave}]{\centering
  \includegraphics[width=0.46\columnwidth]{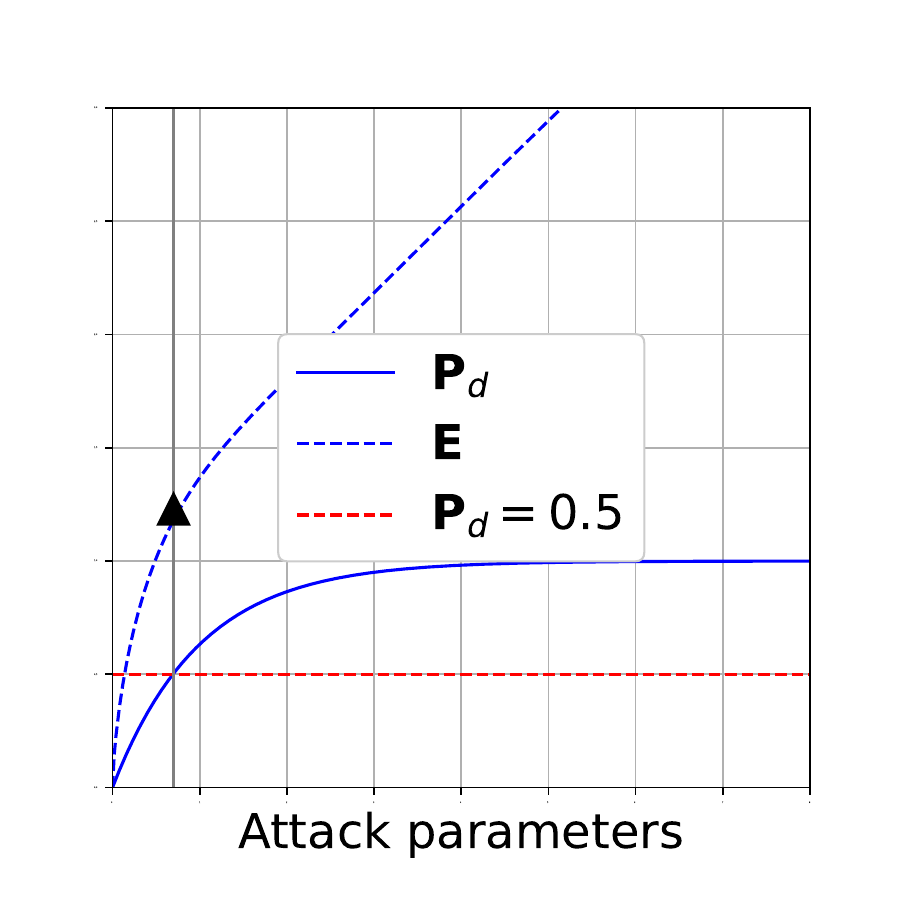}}
  \hfill
  \subfigure[\label{fig:non-convex}]{\centering
  \includegraphics[width=0.46\columnwidth]{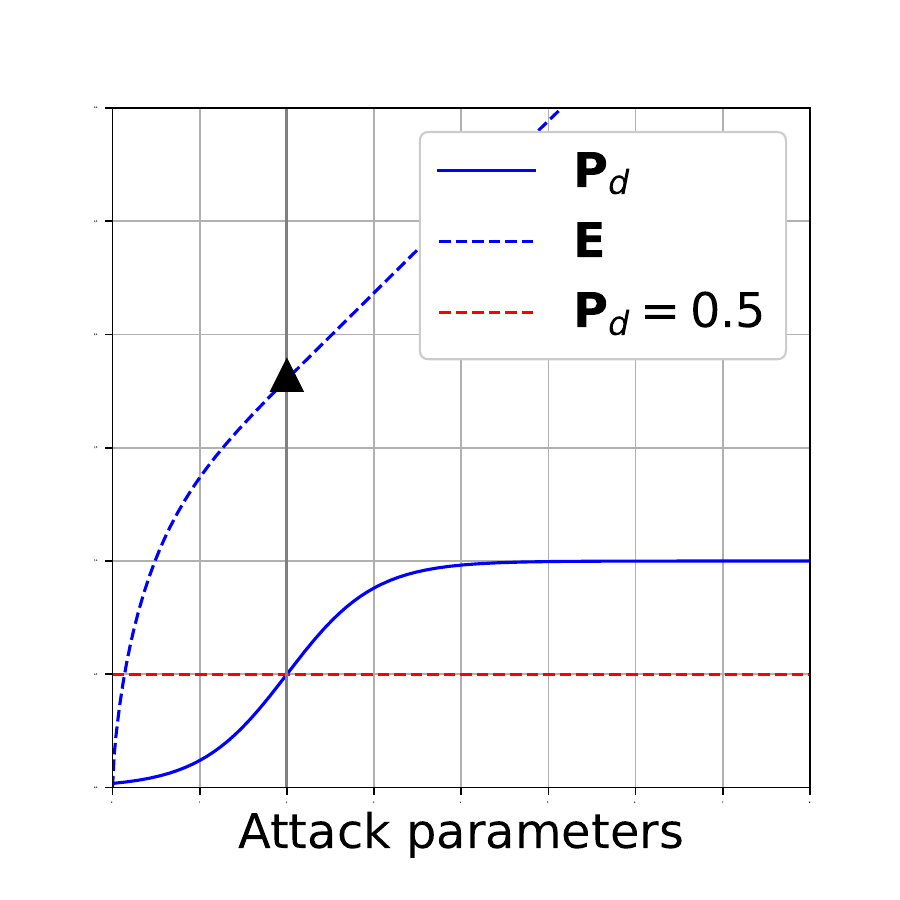}}
  \caption{\subref{fig:concave} $\mathbf{P}_d$ is concave and converging to $1$, \subref{fig:non-convex} $\mathbf{P}_d$ is convex-concave and converging to $1$.}
  \label{fig:PdEdiagram}
  \end{figure}
   \begin{hypothesis}\label{hyp:convexpd}
   $\mathbf{P}_d$ is convex.
   \end{hypothesis}
  \begin{proof}[Under Hypothesis \ref{hyp:convexpd}]
   Given the characteristics listed in Eq.~\eqref{eq:pdcharacter}, $\mathbf{P}_d$ will always be upper-bounded by $1$, therefore $\mathbf{P}_d$ can't be a convex function. \qedhere
   \end{proof}
    \begin{hypothesis}\label{hyp:concavepd}
   $\mathbf{P}_d$ is concave and converging to $1$.
   \end{hypothesis}
  \begin{proof}[Under Hypothesis \ref{hyp:concavepd}]
  A simple demonstration of $\mathbf{P}_d$ and $\mathbf{E}$ is shown in Fig.~\ref{fig:concave}. $\mathbf{P}_d$'s concavity causes it to quickly reach 0.5, limiting optimal attacks to low parameter regimes. While this suggests anomaly detection should be sensitive to minor changes when attack parameters seem insignificant, such high sensitivity is impractical in real radio environments. Measurement errors from legitimate data will also inevitably lead to possible high $p_f$, ultimately bounding performance by the \gls{crlb}, as per Eq.~\eqref{eq:atkeffectivesness2}.
  \end{proof}
  \begin{hypothesis}\label{hyp:vexconcavepd}
   $\mathbf{P}_d$ is convex-concave and converging to $1$.
   \end{hypothesis}
  \begin{proof}[Under Hypothesis \ref{hyp:vexconcavepd}]
  A convex-concave $\mathbf{P}_d$ is depicted in Fig.~\ref{fig:concave}. This design doesn't ensure that optimal attack performance is limited to an insignificant attack parameters regime. In general, such an anomaly detector is more achievable in practice. Given that $p_f$ will be considerably lower, this might result in actual performance superior to when $\mathbf{P}_d$ is concave. Further more, we can determinate the convergence of $\mathbf{E}_d$, according to L'H\^opital's rule, we have,
   \begin{equation*}\label{eq:limits1}
   \begin{split}
   \lim_{A_{t} \to\infty}\frac{\partial\mathbf{E}}{\partial A_{t}} &=  \lim_{A_{t} \to\infty}\frac{\mathbf{E}}{A_{t}} =  1,\\
   \lim_{A_{t} \to\infty}\frac{\partial\mathbf{P}_d}{\partial A_{t}} &= \lim_{A_{t} \to\infty}\frac{\mathbf{P}_d}{A_{t}} =  0,
   \end{split}
   \end{equation*} 
   \begin{equation*}\label{eq:limits2}
   \lim_{A_{t}\to\infty}\big(1-\mathbf{P}_d(A_{t})\big) = 0. 
   \end{equation*}
   Therefore, we now have $\lim_{A_{t} \to\infty}\frac{\partial\mathbf{E}d}{\partial A_{t}} = 0$. This result indicates that under resource constraints and in the presence of anomaly detection, the effectiveness of an attack will always be always converging. \end{proof}
  
    \section{Performance evaluation}\label{sec:SIMUTDAD}
    \begin{table}[!htbp]
		\centering
        \scriptsize
        \vspace{-2mm}
		\caption{Simulation setup 2}
		\label{tab:setup2}
		\begin{tabular}{>{}m{0.2cm} | m{1.6cm} l m{3.7cm}}
			\toprule[2px]
			&\textbf{Parameter}&\textbf{Value}&\textbf{Remark}\\
			\midrule[1px]
			
                &Map Size&$[300,300,10]$& Define map size\\
                
			  &$n$&$100$& number of anchor \gls{uav}s\\
			
			&$\sigma^2_{\mathbf{p},n}$&$\sim\mathcal{U}(0.1,3.0)$& Position error power / \si{\meter^2}\\
			&$V$&$\sim\mathcal{U}(0.3,1.7)$& Travel speed \si{\meter/s}\\ \multirow{-5}{*}{\rotatebox{90}{\textbf{System}}} 
            & $T$ & 15 s &  Simulation time\\

			\midrule[1px]

			&$[\lambda_r,\lambda_p]$ & $[0.2,-0.8]$& Reward and penalty\\
			&$\gamma$ & $0.5$& Forget factor\\

			&$\epsilon^t$ & $0.95$ & Confidence threshold\\
                \midrule[1px]
                \multirow{-6}{*}{\rotatebox{90}{\textbf{TAD}}}&$n_m$ & $33$& Malicious \gls{uav}s\\
                &$r_a$ & $0.7$&Attack rate\\ 
                \multirow{-3}{*}{\rotatebox{90}{\textbf{Attacker}}}
                &$\tilde{\mathrm{B}}_n$& $6$&Position bias \\
			&$\tilde{\sigma}^2_{\mathrm{M},n}$ & $600$& Manipulation index \\
                
            \bottomrule[2px]
		\end{tabular}
	\end{table}
    To evaluate the resilience and effectiveness of \gls{tad}, we conducted simulations within our cooperative localization system, subjecting it to varying attack modes and strategies. \revise{}{In this scenario, a target \gls{uav} navigated through a dynamic environment containing potentially malicious \glspl{uav}. We utilized downsampled \gls{3d} \gls{uav} trajectories based on \cite{traj2025nacar, delmerico2019drone, fonder2019midair}, which provided simulation results that more accurately reflect the complexity of real-world \gls{uav} flight patterns.} We made the assumption that the attacker possesses limited resources, allowing it to selectively hijack specific \glspl{uav}. The orchestrated attacks were designed to be either random or coordinated in a same time frame. The configuration for \gls{magd} adheres to the specifications outlined in Tab.~\ref{tab:setup1}. Parameters for the overall system, TAD, and the attackers are detailed in Tab.~\ref{tab:setup2}. The attack rate was set at 0.7. All anchor UAVs were strategically distributed across the map, each navigating towards random destinations. This resulted in an average of 15 anchor UAVs within a cooperative localization range of $50 $\text{m}. The simulation results are depicted in Fig.~\ref{fig:taddemo}. \revise{}{Our evaluation compares \gls{tad} against a recursive approach \cite{han2024secure}, and two \gls{doc}-based approaches adapted from \cite{DOCwon2019}: \gls{wdoc} and \gls{twdoc}. \gls{wdoc} and \gls{twdoc} differ in that TWDOC filters detected anomalies using a predefined threshold before applying them in localization.}
   \begin{figure}[!t]
   \centering
   \subfigure[\label{fig:trace_mani_bias}]{\centering
   \reviseboxfinal{\includegraphics[width=0.46\columnwidth]{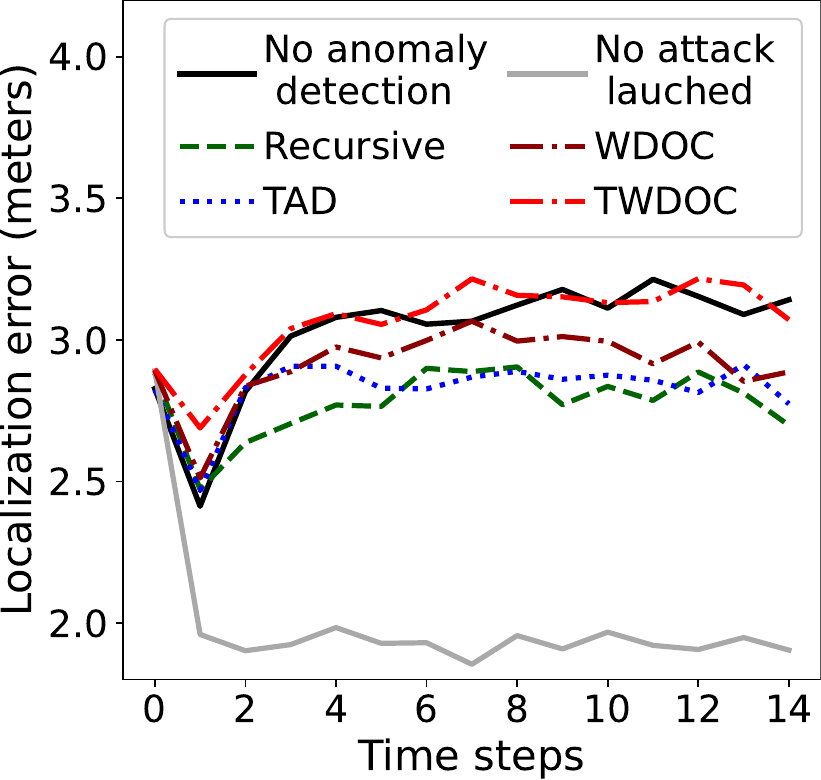}}}
   \hfill
   \subfigure[\label{fig:trace_mani_ran}]{\centering
   \reviseboxfinal{\includegraphics[width=0.456\columnwidth]{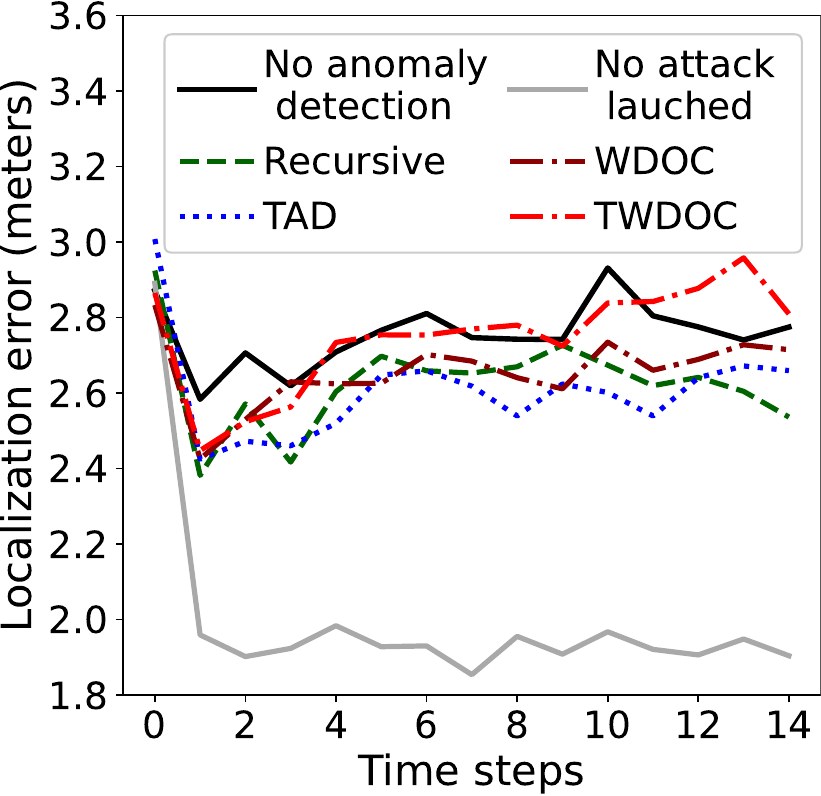}}}
   \\
   \subfigure[\label{fig:trace_bia_coor}]{\centering
   \reviseboxfinal{\includegraphics[width=0.46\columnwidth]{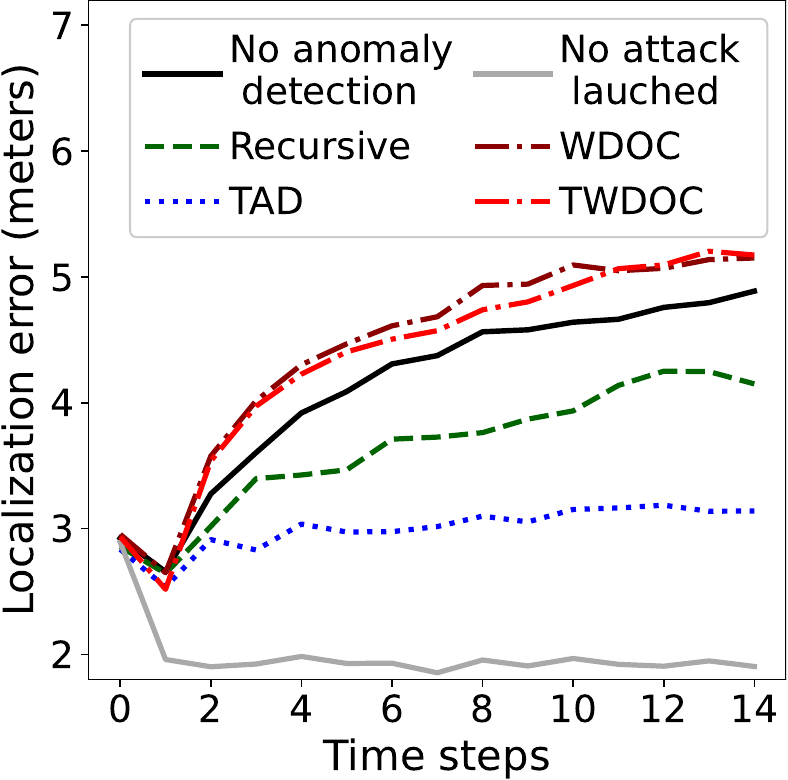}}}
   \hfill
   \subfigure[\label{fig:trace_bia_ran}]{\centering
   \reviseboxfinal{\includegraphics[width=0.472\columnwidth]{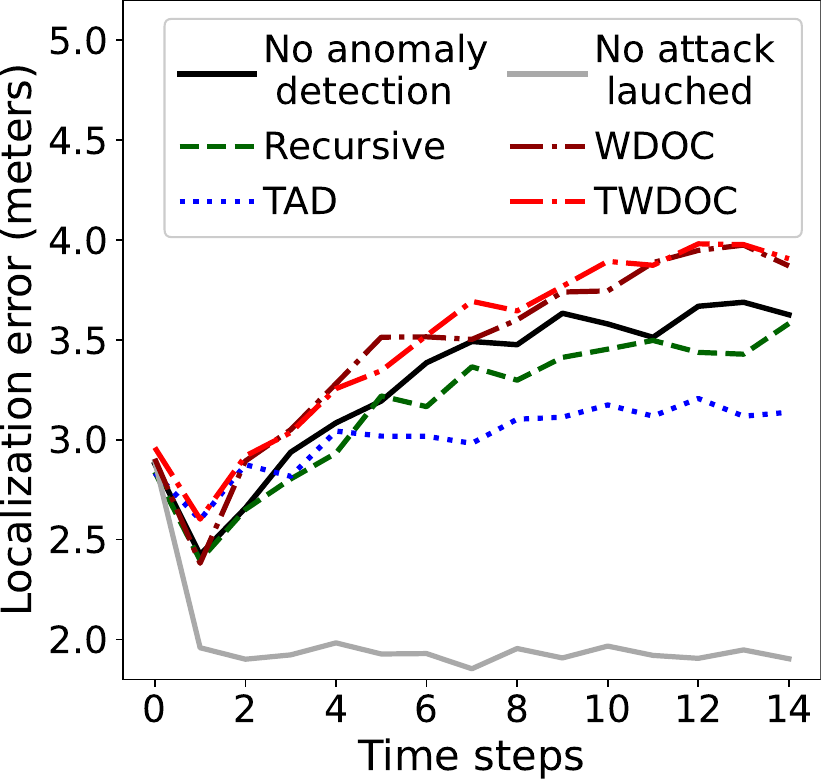}}}
   \revise{}{\vspace{-1mm}\caption{Error development of \subref{fig:trace_mani_bias} coordinated manipulation attack, \subref{fig:trace_mani_ran} random manipulation attack, \subref{fig:trace_bia_coor} coordinated bias attack, and \subref{fig:trace_bia_ran} random bias attack, respectively.}}
   \label{fig:taddemo}
   \end{figure}
   \revise{}{\begin{table}[!htbp]
   \centering
   \revise{}{\vspace{-2mm}\caption{Average localization error (meters) of different anomaly detection approaches}}
   \revisebox{
   \begin{tabular}{|c|c|c|c|c|c|}
   \hline
   &No AD & Recur. & TAD & WDOC & TWDOC \\ \hline
   No attack & 1.99&   -    &    -   &   -    &  -    \\ \hline
   Coor. mani &2.76 & 2.62 & 2.61 & 2.66 & 2.76 \\ \hline
   Ran. mani &3.06 & 2.79 & 2.83 & 2.93 & 3.07  \\ \hline
   Coor. bias &4.12 & 3.63 & 3.00 & 4.44 & 4.38    \\ \hline
   Ran. bias &3.28 & 3.17 & 2.96 & 3.46 & 3.49  \\ \hline
   \end{tabular}\label{tab:performAD}}
   \end{table}}
   
    \revise{}{As shown in Fig.~\ref{fig:taddemo} and Tab.~\ref{tab:performAD}, the performance ranking is $\emph{TAD}>\emph{Recursive}>\emph{WDOC}>\emph{TWDOC}$. DOC-based detection methods have limited effectiveness. While they can partially detect manipulation attacks with large unbiased errors, they perform poorly against subtle bias attacks, actually increasing errors compared to no detection. Their performance further degrades in UAV scenarios due to meter-level anchor position errors, a far cry from the negligible errors in their original wireless sensor network application. Additionally, \gls{twdoc}'s practice of excluding anomalies proves counterproductive against manipulation attacks where unbiased errors naturally cancel out, though this effect is less significant for bias attacks. The recursive approach relies on the first localization for anomaly detection. However, when the attack effect is significant at each time step, the recursive approach can be penetrated step by step. While its performance may match \gls{tad} in some cases, it requires running the localization algorithm twice, making it far more computationally expensive.}
    \begin{figure}[!t]
    \centering
   \subfigure[\label{fig:detec_bias}]{\centering
   \reviseboxfinal{\includegraphics[width=0.466\columnwidth]{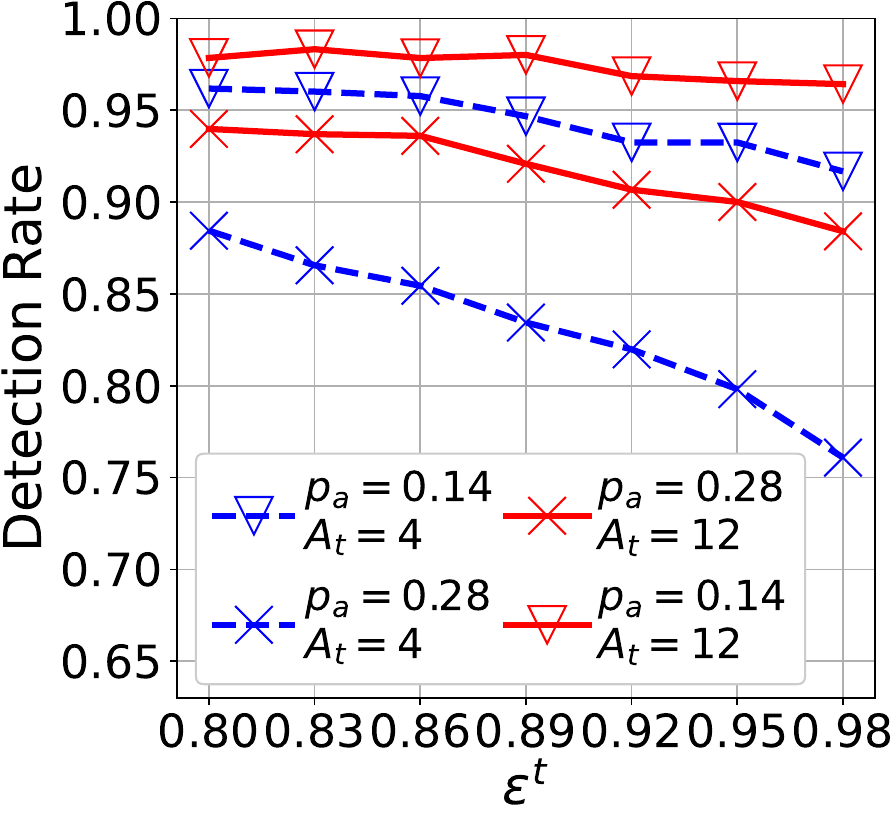}}}
   \hfill
   \subfigure[\label{fig:detec_mani}]{\centering
   \reviseboxfinal{\includegraphics[width=0.466\columnwidth]{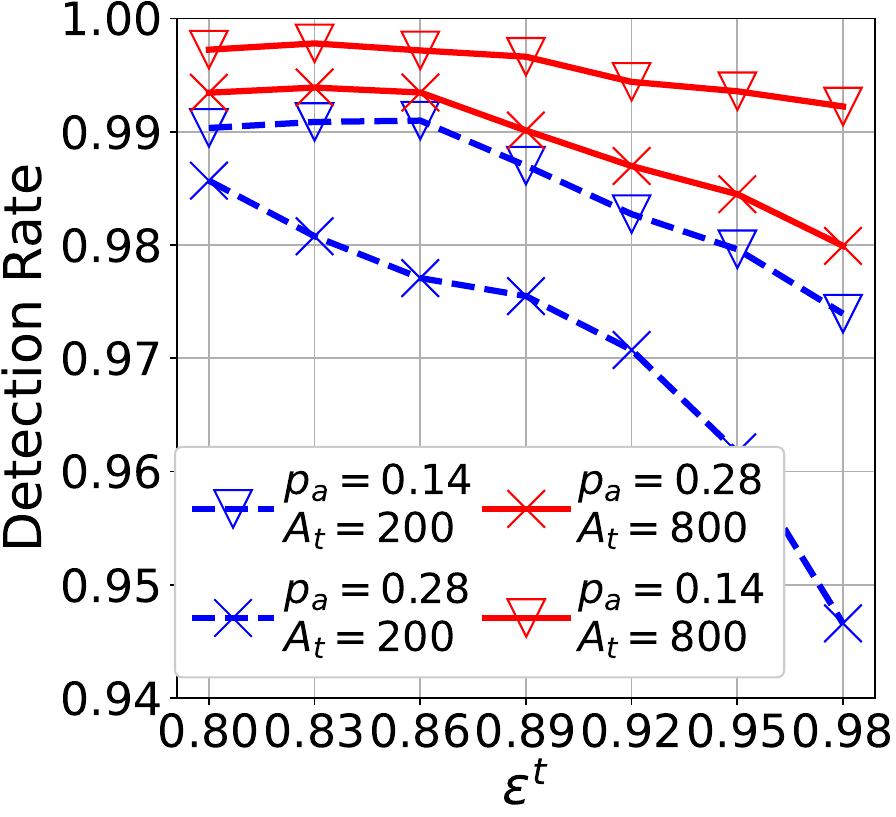}}}
   \\
   \subfigure[\label{fig:false_bias}]{\centering
   \reviseboxfinal{\includegraphics[width=0.466\columnwidth]{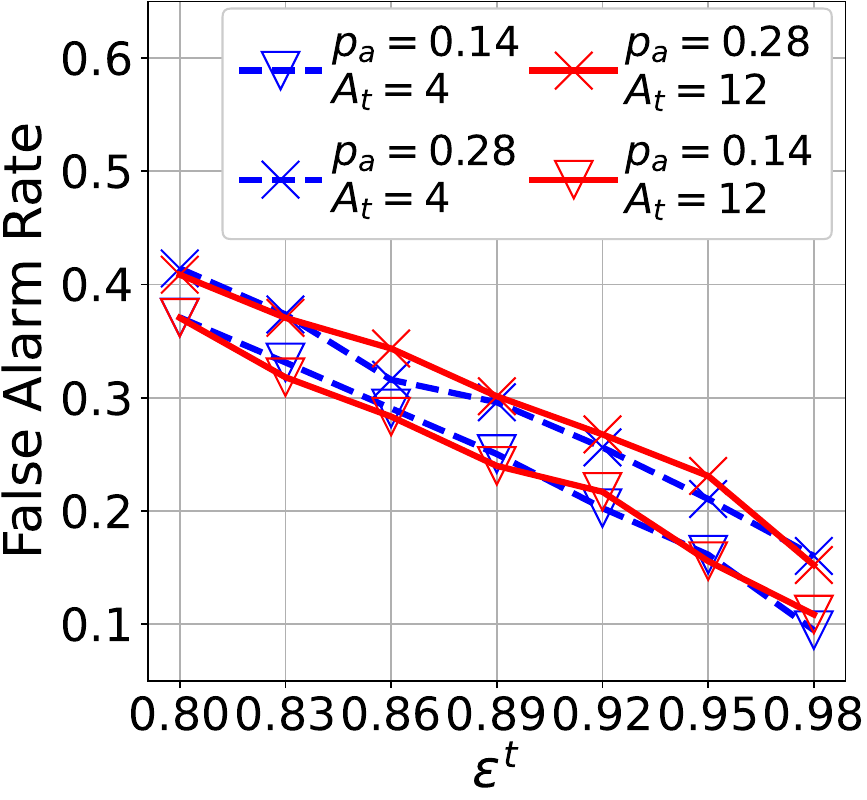}}}
   \hfill
   \subfigure[\label{fig:false_mani}]{\centering
   \reviseboxfinal{\includegraphics[width=0.466\columnwidth]{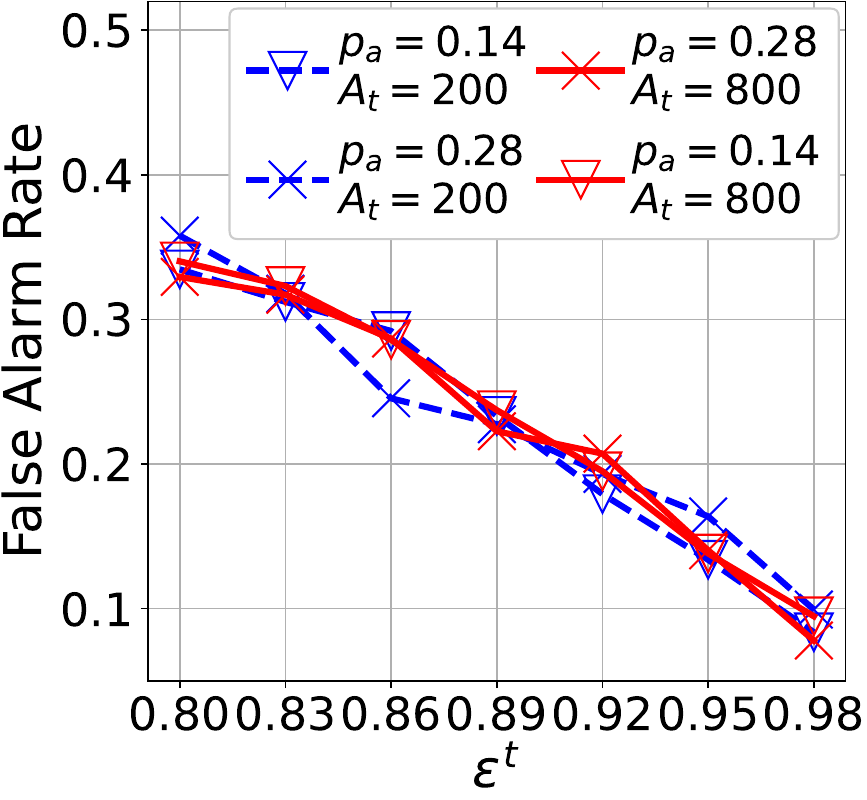}}}
   \revise{}{\vspace{-1mm}\caption{Sensitivity test of \gls{tad}: \subref{fig:detec_bias} detection rate under bias attack, \subref{fig:detec_mani} detection rate under manipulation attack, \subref{fig:false_bias} false alarm rate under bias attack, \subref{fig:false_mani} false alarm rate under manipulation attack.}}
   \label{fig:sensi_test}
   \end{figure}

   \revise{}{As described in Sub.~\ref{subsec:reconatk}, the anomaly detector's detection rate can be characterized by $\mathbf{P}_d(p_a,A_t)$. While the $\mathbf{P}_d$ of \gls{tad} lacks mathematical tractability, we validate its sensitivity through numerical simulations. Simulations explore parameter combinations with $\epsilon^e$ ranging from $0.80$ to $0.98$, while testing different attack scenarios with $p_a = {0.14, 0.28}$ and $A_t = {200, 800}$. It is generally observed that as $\epsilon^e$ decreases, \gls{tad} becomes more sensitive to attacks, though this comes at the cost of a drastically increased false alarm rate. The detection rate patterns across different attack scenarios support our previously described characteristics of $\mathbf{P}_d$, specifically that $\frac{\partial \mathbf{P}_d}{\partial p_a} < 0$ and $\frac{\partial \mathbf{P}_d}{\partial A{t}} > 0$. Moreover, the false alarm rate $\mathbf{P}_f$ exhibits strong dependency on $p_a$ while under bias attacks but remaining independent of $A_{t}$. }
    
    \revise{}{With increasing $\epsilon^t$ and all other parameters held constant, it is apparent that $\mathbf{E}_d$ is monotonically increasing, as per Eq.~(\ref{eq:atkeffectivesness1}), while $\mathrm{CRLB}\big(N(1-p_a\mathbf{P}_d-\mathbf{P}_f)\big)$ is monotonically decreasing. Therefore, an optimal $\epsilon^t$ exists as per Eq.~(\ref{eq:atkeffectivesness2}). Given that the error power under attacks, $\sigma'_{M}$, is not deducible, $\mathbf{E}_d$ of different attacks is consequently not mathematically traceable.} \revise{}{Firstly, we evaluate $\mathbf{E}_d$ with varying attack parameters, results are obtained from $2000$ Monte Carlo iterations. The simulation results presented in Figs.~\ref{fig:biaspara}--\ref{fig:mani_para} corroborate our analytical previous findings, demonstrating that attack performance consistently converges to a threshold under anomaly detection as the attack parameters increase.} However, demonstrating optimal attack parameters in the simulation proves challenging, as it's not possible to iterate through all possible attack parameters due to computational constraints. \revise{}{} A comparison \reviseprev{of}{between} Figs.~\ref{fig:biaspara} and \ref{fig:mani_para} reveals distinct patterns in the effectiveness of attacks across different modes. The bias mode exhibits a more linear-like (compared to mani- mode) response to changes in attack parameters. In contrast, the manipulation mode demonstrates a gradual convergence trend as attack parameters increase. It's important to note that no ideal anomaly detector achieves $\lim_{A_{t}\to\infty}\mathbf{P}_d(A_{t}) = 1$. This implies a breakthrough point exists where attack effectiveness surpasses the threshold. However, such extreme attacks are easily identifiable and can be countered by simply discarding the aberrant data. \revise{}{Secondly, we investigate whether there exists an optimal $\epsilon^t$ that minimizes $\mathbf{E}_d$ under the given attack parameters. The simulation results, presented in Fig.~\ref{fig:ERRinopt_ep}, compare $\mathbf{E}_d$ against two baselines: one without any attacks and another without \gls{tad}. The results show that $\mathbf{E}_d$ initially decreases due to the reduction in false positive rate $\mathbf{P}_f$, then subsequently increases due to the decreased detection rate $\mathbf{P}_d$. An optimal perturbation threshold $\epsilon^t$ is found to exist within the range $[0.89, 0.95]$. }
    \vspace{-3mm}
    \begin{figure}[!htbp]
		\centering
        \subfigure[Attack effectiveness under bias attack without anomaly detection or with \gls{tad}\label{fig:biaspara}]{
            \includegraphics[width=0.75\linewidth]{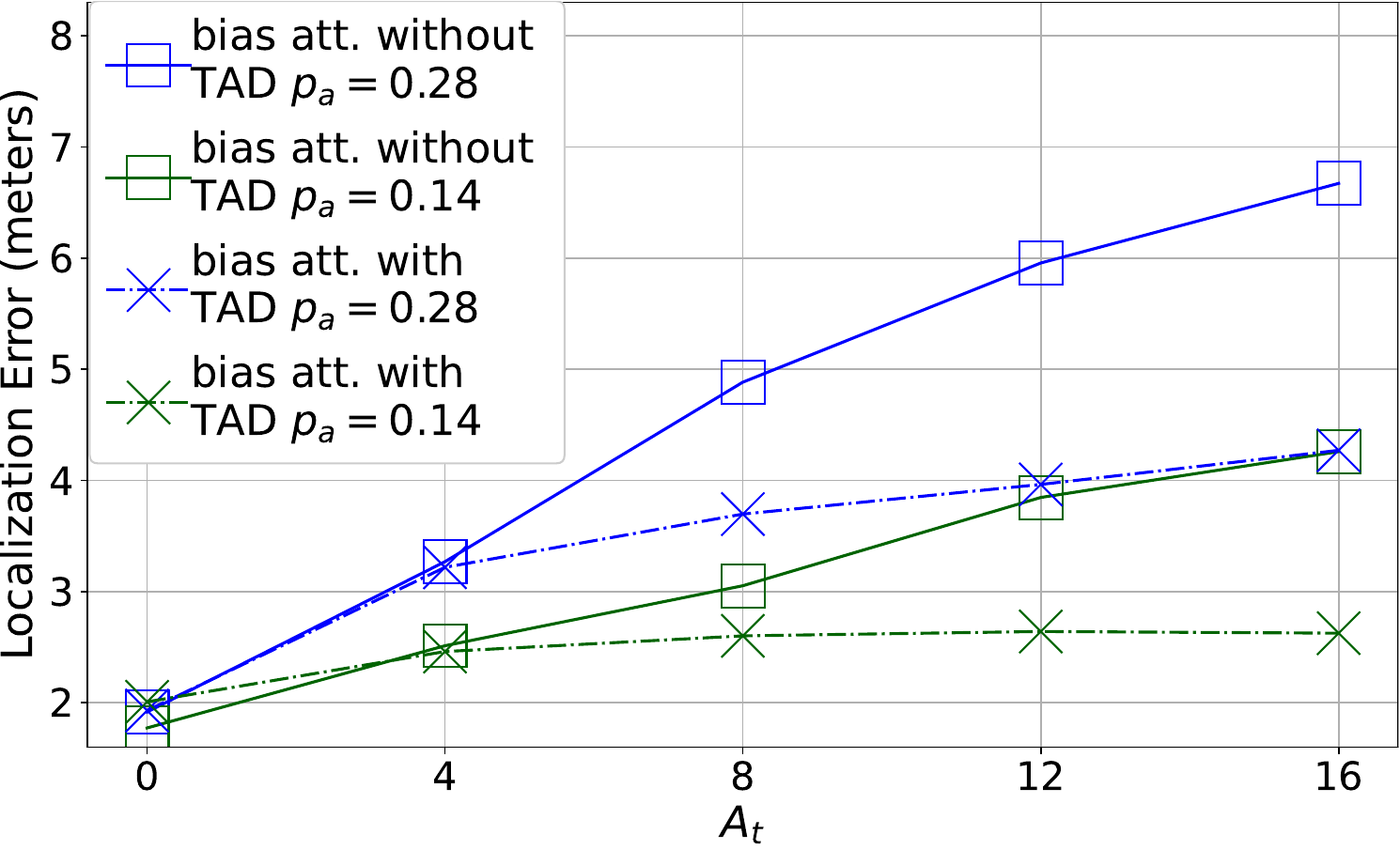}}
        \subfigure[Attack effectiveness under manipulation attack without anomaly detection or with \gls{tad}\label{fig:mani_para}]{
		      \includegraphics[width=0.75\linewidth]{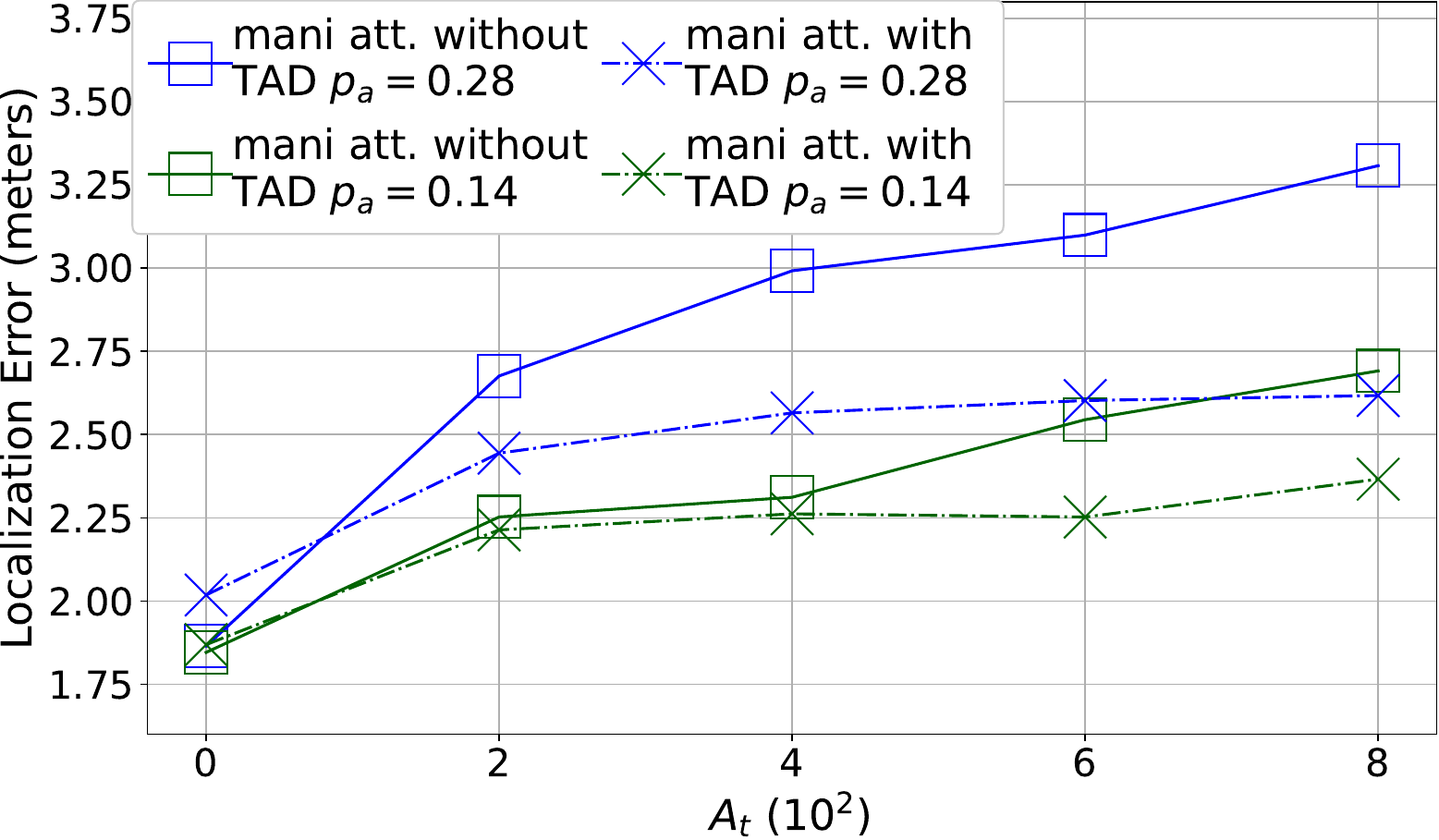}}
        \vspace{-1mm}
		\caption{Attack effectiveness evaluation}
		\label{fig:attkeffi}
	\end{figure}  
    \begin{figure}[!htbp]
		\centering
		\revisebox{\includegraphics[width=0.78\linewidth]{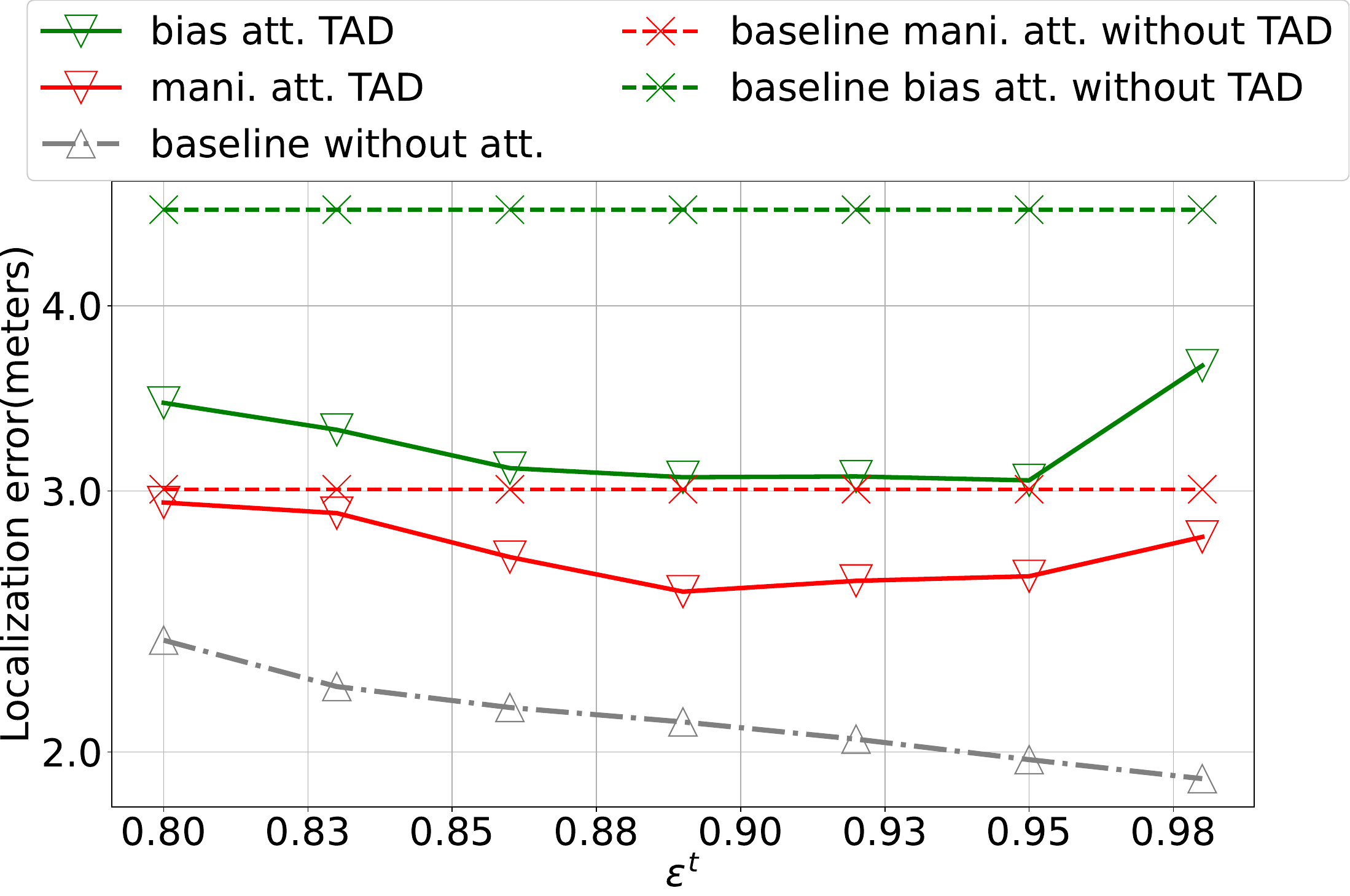}}
        \vspace{-3mm}
		\revise{}{\caption{Localization error with varying $\epsilon^t$. For manipulation attack $A_t = 800$, $p_a = 0.14$; for bias attack $A_t = 8$, $p_a = 0.14$.}}
		\label{fig:ERRinopt_ep}
	\end{figure}
    \begin{table}[!htbp]
   \centering
    \vspace{-1mm}
   \revise{}{\caption{Average localization error (meters)}}
   \revisebox{\begin{tabular}{|c|c|}
    \hline
    No attacks&1.60\\ \hline
    Without TAD under attacks & 9.42 \\ \hline
    TAD under attacks & 2.12 \\ \hline
    TAD and RP & 1.82 \\ \hline
    TAD and RP while under fals. repu. & 1.78 \\ \hline
    \end{tabular}}
   \label{tab:performTP}
   \end{table}
   
   We then conducted simulations to evaluate the performance of both \gls{tad} and \gls{rp}, involving malicious \glspl{uav} employing a coordinated stalking strategy to target the victim. The parameters for \gls{magd} and \gls{tad} adhere to those outlined in Tab.~\ref{tab:setup1} and Tab.~\ref{tab:setup2}, with the attack mode specifically set to bias mode. Ten \glspl{uav} uploaded their local reputations to the cloud. The percentage of malicious \glspl{uav} was set at $30\%$, comprising three \glspl{uav} capable of attacking others while sharing falsified reputations. We assumed the attacker possesses knowledge of the victim target's actual position, albeit with some ambiguity. (The estimated position $\hat{\mathbf{p}}_k$, shared by $u_k$, was avoided as it can be misleading once attacks become effective.) Fig.~\ref{fig:ERRinTP} illustrates the average errors obtained from $100$ simulations at different time steps, \revise{}{while Tab.\ref{tab:performTP} presents the consolidated average localization error across all $100$ time steps.} Our simulation results demonstrated the effectiveness of \gls{tad} in mitigating stalking attack strategies at the specified attack density, resulting in a significant reduction in localization errors compared to scenarios without \gls{tad}. Moreover, the incorporation of RP improved localization performance, accelerating error convergence more than \gls{tad} alone, especially after 40 time steps. Notably, RP also showed resilience against falsified reputation information.
    \begin{figure}[!htbp]
		\centering
		\revisebox{\includegraphics[width=0.75\linewidth]{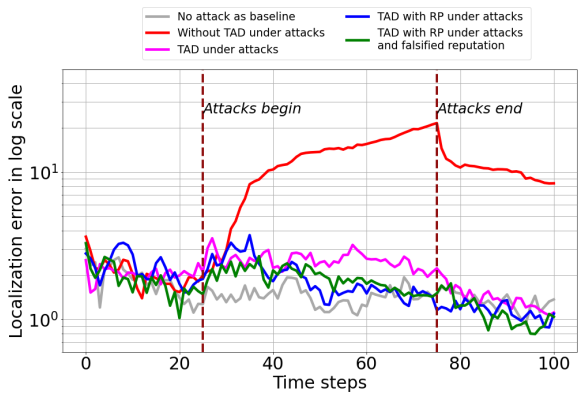}}\vspace{-3mm}
		\revise{}{\caption{Localization error at different time step (for the baseline, UAVs are set to follow the target as well)}}
		\label{fig:ERRinTP}
	\end{figure} 
     
  \section{Conclusion}\label{conclu}
  In this paper, we \reviseprev{}{have} derived the \gls{3d} \gls{crlb} for cooperative localization through geometric interpretations and conducted an in-depth exploration of localization in non-uniform spatial distributions. We aspire that these theoretical findings offer intuitive insights into \gls{3d} localization. Moreover, we \reviseprev{unveiled a novel approach (\gls{magd}) designed to seamlessly adapt to}{have adopted the \gls{magd} approach to address} the dynamic mobility and \reviseprev{fluctuating}{} availability of anchor \gls{uav}s. In addition, we analyzed the impact of falsified information on cooperative localization and proposed adaptive anomaly detection strategies (\gls{tad} and \gls{rp}) to enhance resilience.

    \reviseprev{However, it's important to note}{Yet it shall be noted} that our study \reviseprev{didn't}{does not} extensively delve into potential attacks against \reviseprev{\gls{rp}}{the \gls{rp} framework itself}. An adept adversary \reviseprev{might}{may} manipulate a subset of compromised \reviseprev{\acp{uav}}{nodes} to launch targeted attacks on specific \acp{uav}, while another subset uploads falsified reputation. This presents a challenge for both our \gls{tad} and \gls{rp}, requiring effective detection and mitigation strategies. \reviseprev{To counteract this potential threat, there is a pressing need for a novel approach to identify emerging attack patterns.}{The pressing need for novel approaches to resolve this potential threat is leading us to future studies.}
 


  
  \appendix
  \section*{Proof of Theorem~\ref{theorem:crlb}}\label{appendix:Theorem_proof}
  \begin{proof}
  When represented as the Cayley–Menger determinant, the volume of a tetrahedron formed by the anchors $u_n$, $u_m$, $u_l$, \revise{}{and the target $u_k$} can be denoted \revise{}{as:} 
  \begin{equation}\label{eq:3Dxyz}
  V_{n,m,l}(x,y,z) = \frac{1}{6} \begin{vmatrix} x_k-x_n&y_k-y_n&z_k-z_n \\
  x_k-x_m&y_k-y_m&z_k-z_m \\x_k-x_l&y_k-y_l&z_k-z_l \\
  \end{vmatrix}.
  \end{equation}
  The determinant of \gls{fim}  being provided in Eq.~\eqref{eq:fim3D} can be therefore rewritten as
  \begin{align}\label{deterFIM}
  {|\mathbf{I}(h)|} = \frac{6}{(\sigma^2_M)^3}\sum_{n=1}^N\sum_{m=1}^N\sum_{l=1}^N\frac{V_{n,m,l}^2(x,y,z)}{(d_{k,n} d_{k,m}d_{k,l})^2}.
  \end{align}
  Meanwhile, $adj\{\mathbf{I}(h)\}$ takes the form in Eq.~\eqref{eq:adjfim3D}. 
  \begin{figure*}[t]
  \begin{align}\label{eq:adjfim3D}
  adj\{\mathbf{I}(h)\} = \frac{2}{(\sigma_M^2)^2} \begin{bmatrix}
  \sum\limits_{n=1}^{N}\sum\limits_{m=1}^{N} \frac{V_{n,m}^2(y,z)}{(d_{k,m}d_{k,n})^2} & \sum\limits_{n=1}^{N}\sum\limits_{m=1}^{N} \frac{V_{n,m}(x,z)V_{n,m}(y,z)}{(d_{k,m}d_{k,n})^2} & \sum\limits_{n=1}^{N}\sum\limits_{m=1}^{N} \frac{V_{n,m}(x,y)V_{n,m}(y,z)}{(d_{k,m}d_{k,n})^2} \\
  \sum\limits_{n=1}^{N}\sum\limits_{m=1}^{N} \frac{V_{n,m}(x,z)V_{n,m}(y,z)}{(d_{k,m}d_{k,n})^2} & \sum\limits_{n=1}^{N}\sum_{m=1}^{N} \frac{V_{n,m}^2(x,z)}{(d_{k,m}d_{k,n})^2} & \sum_{n=1}^{N}\sum\limits_{m=1}^{N} \frac{V_{n,m}(x,y)V_{n,m}(x,z)}{(d_{k,m}d_{k,n})^2} \\
  \sum\limits_{n=1}^{N}\sum\limits_{m=1}^{N} \frac{V_{n,m}(x,y)V_{n,m}(y,z)}{(d_{k,m}d_{k,n})^2} & \sum\limits_{n=1}^{N}\sum\limits_{m=1}^{N} \frac{(V_{n,m}(x,y)V_{n,m}(x,z)}{(d_{k,m}d_{k,n})^2} & \sum\limits_{n=1}^{N}\sum\limits_{m=1}^{N} \frac{V_{n,m}^2(x,y)}{(d_{k,m}d_{k,n})^2}
  \end{bmatrix}
  \end{align}
  \rule{\textwidth}{0.4pt}
\end{figure*}
  Recalling Eq\reviseprev{}{s}.~\eqref{eq:CRLB}--\eqref{eq:ivFIM}, we now have the \gls{crlb} \reviseprev{in the Eq.~\eqref{eq:CRLBclosed}}{}  
  \begin{align}\label{eq:CRLBclosed}
    \sigma_{p}^2 &\geqslant \frac{\sigma_M^2}{3}\frac{\sum_{n=1}^{N}\sum_{m=1}^{N-1}\frac{V_{n,m}^2(x,y)+V_{n,m}^2(y,z)+V_{n,m}^2(x,z)}{(d_{k,n}d_{k,m})^2}}{\sum_{n=1}^N\sum_{m=1}^N\sum_{l=1}^N\frac{V_{n,m,l}^2(x,y,z)}{(d_{k,n}d_{k,m}d_{k,l})^2}},
  \end{align}
  where $V_{n,m}(x,y)$, $V_{n,m}(x,z)$ and $V_{n,m}(y,z)$ are the area of the triangle formed by $u_k$, $u_n$ and $u_m$ being projected onto a Cartesian plane, as described in Eqs.~(\ref{eq:2Dxy})-(\ref{eq:2Dyz}) and depicted in Fig.~\ref{fig:V3D}.
  \begin{align}\label{eq:2Dxy}
  V_{n,m}(x,y) = \frac{1}{2} \begin{vmatrix} x_k-x_n&y_k-y_n \\
  x_k-x_m&y_k-y_m
  \end{vmatrix}
  \end{align}    
  \begin{align}\label{eq:2Dxz}
  V_{n,m}(x,z) = \frac{1}{2} \begin{vmatrix} x_k-x_n&z_k-z_n \\
  x_k-x_m&z_k-z_m
  \end{vmatrix}
  \end{align}  
  \begin{align}\label{eq:2Dyz}
  V_{n,m}(y,z) = \frac{1}{2} \begin{vmatrix} y_k-y_n&z_k-z_n \\
  y_k-y_m&z_k-z_m
  \end{vmatrix}
  \end{align} 

  Eq.~\eqref{eq:CRLBclosed} can be further decomposed:
  \begin{align}\label{eq:CRLBclosed2}
    \sigma_{p}^2 &\geqslant \frac{\sigma_M^2}{3}\cdot\frac{f_1}{f_2}.
  \end{align}
  \begin{align}
    f_1 &= \sum_{n=1}^{N}\sum_{m=1}^{N}\frac{V_{n,m}^2(x,y)+V_{n,m}^2(y,z)+V_{n,m}^2(x,z)}{(d_{k,n}d_{k,m})^2},\label{eq:fdn}\\
    f_2 &= \sum_{n=1}^N\sum_{m=1}^N\sum_{l=1}^{N}\frac{V_{n,m,l}^2(x,y,z)}{(d_{k,n}d_{k,m}d_{k,l})^2}.\label{eq:fdn2}
  \end{align}
  This derivation follows the work introduced in \cite{app13032008}, we extend it by considering $N$ is large enough and $u_n$, $u_m$, and $u_l$ are uniformly distributed around $u_k$, thus $\sum_{n=1}^Nd_{k,n} = \sum_{n=1}^Nd_{k,m} = \sum_{n=1}^Nd_{k,l} = N\Bar{d}$, while $\Bar{d}$ is the average distance. Now, for a random triangle formed by $u_k$, $u_n$, and $u_l$, denoted as $\triangle U_{k,n,m}$, with an area $\mathcal{A}$. The areas of its projections onto the three Cartesian planes are denoted as $\alpha\mathcal{A}$, $\beta\mathcal{A}$, and $\gamma\mathcal{A}$. There is
  \begin{align}\label{eq:constraint}
  \begin{split}
    \alpha\mathcal{A} + \beta\mathcal{A} + \gamma\mathcal{A} = \mathcal{A}.
  \end{split}
  \end{align} 
  We can get $\alpha + \beta + \gamma = 1$, and $\alpha^2 + \beta^2 +\gamma^2 \leqslant 1$ due to the Cauchy-Schwarz inequality. Now $f_1, f_2$ can be rewritten as
  \begin{equation}\label{eq:CRLBclosedf1}
    f_1  = N^2\frac{(\alpha^2+\beta^2+\gamma^2)\mathcal{A}^2}{(\Bar{d})^4},\quad
    f_2 = N^3\frac{(\bar{d_H})^2\mathcal{A}^2}{9(\Bar{d})^6},
  \end{equation}
  where $\bar{d_H}$ is the average height of the tetrahedron (depicted in Fig.~\ref{fig:V3D}). Given that the average distance $d_{k,l}=\Bar{d}$, thus $\bar{d_H} = \mathbb{E}\{|\Bar{d} \cos \theta|\}$, where $\theta$ is the angle between the height and $d_{k,l}$. With $u_l$ uniformly distributed around $u_k$, we have $(\bar{d_H})^2 = {\Bar{d}}^2 \mathbb{E}\{\cos^2 \theta\}$ and $\mathbb{E}\{\cos^2 \theta\} = \frac{1}{2}$, thus $\bar{d_H} = \frac{\sqrt{2}}{2}\Bar{d}$. Summarizing Eq.~(\ref{eq:CRLBclosedf1}), the \gls{crlb} can be reformulated:
  \begin{equation}\label{eq:CRLBclosed4}
    \sigma_{p}^2 \geqslant \frac{\sigma_M^2}{3}\cdot\frac{f_1}{f_2}
    = 6\sigma_M^2\frac{(\alpha^2+\beta^2+\gamma^2)}{N}.\\
  \end{equation}
\begin{figure}[!h]
  \centering
  \includegraphics[width=0.6\linewidth]{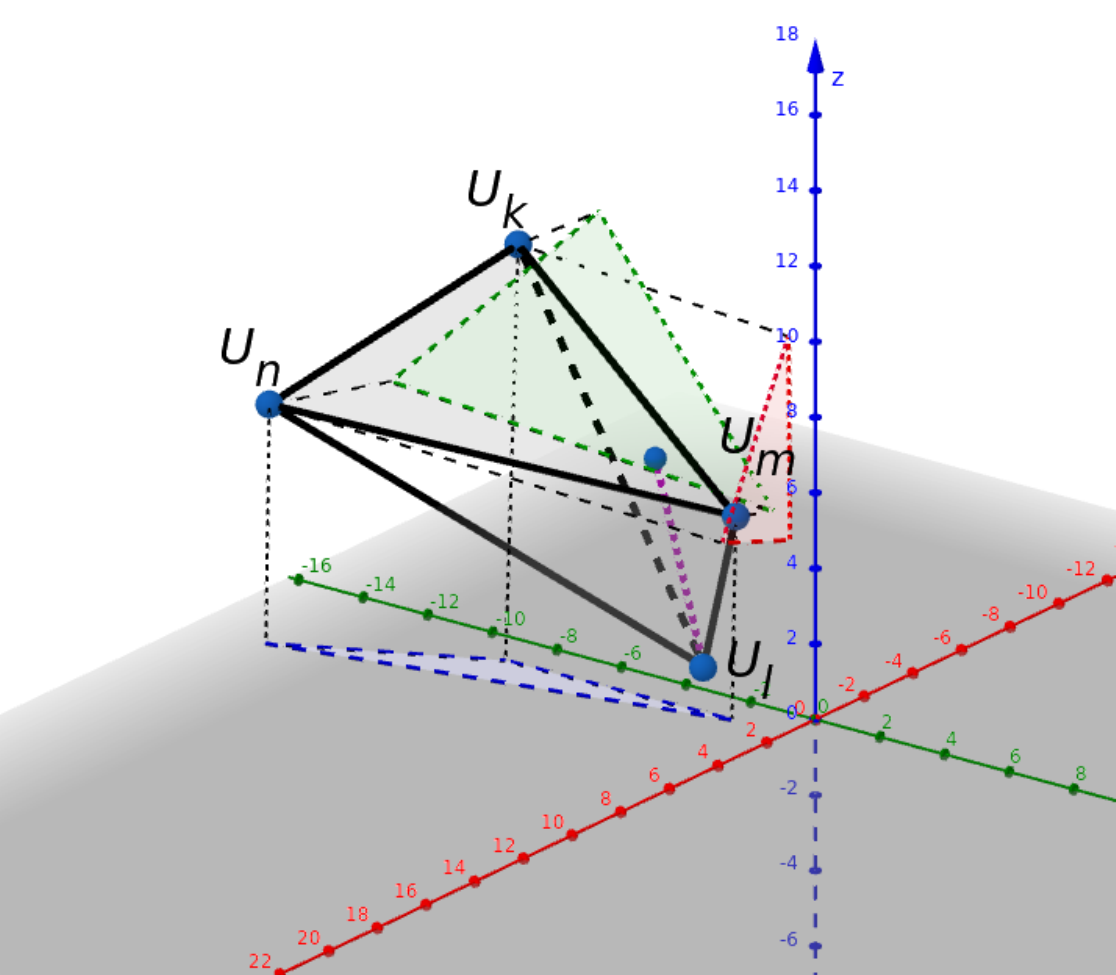}
  \caption{Geometric interpretation of 3D CRLB. Green, blue and red triangles are the projections on Cartesian planes; The purple dashed line is the height of tetrahedron with respect to $\triangle U_{k,n,m}$.}
  \label{fig:V3D}
  \end{figure}
  Leveraging $\alpha^2 + \beta^2 +\gamma^2 \leqslant 1$, we have
  \begin{equation}\label{eq:CRLBclosed41}
  \sigma_{p}^2 \geqslant \frac{6\sigma_M^2}{N}.
  \end{equation}
\end{proof}

\begin{IEEEbiography}[{\includegraphics[width=1in,height=1.25in,clip,keepaspectratio]{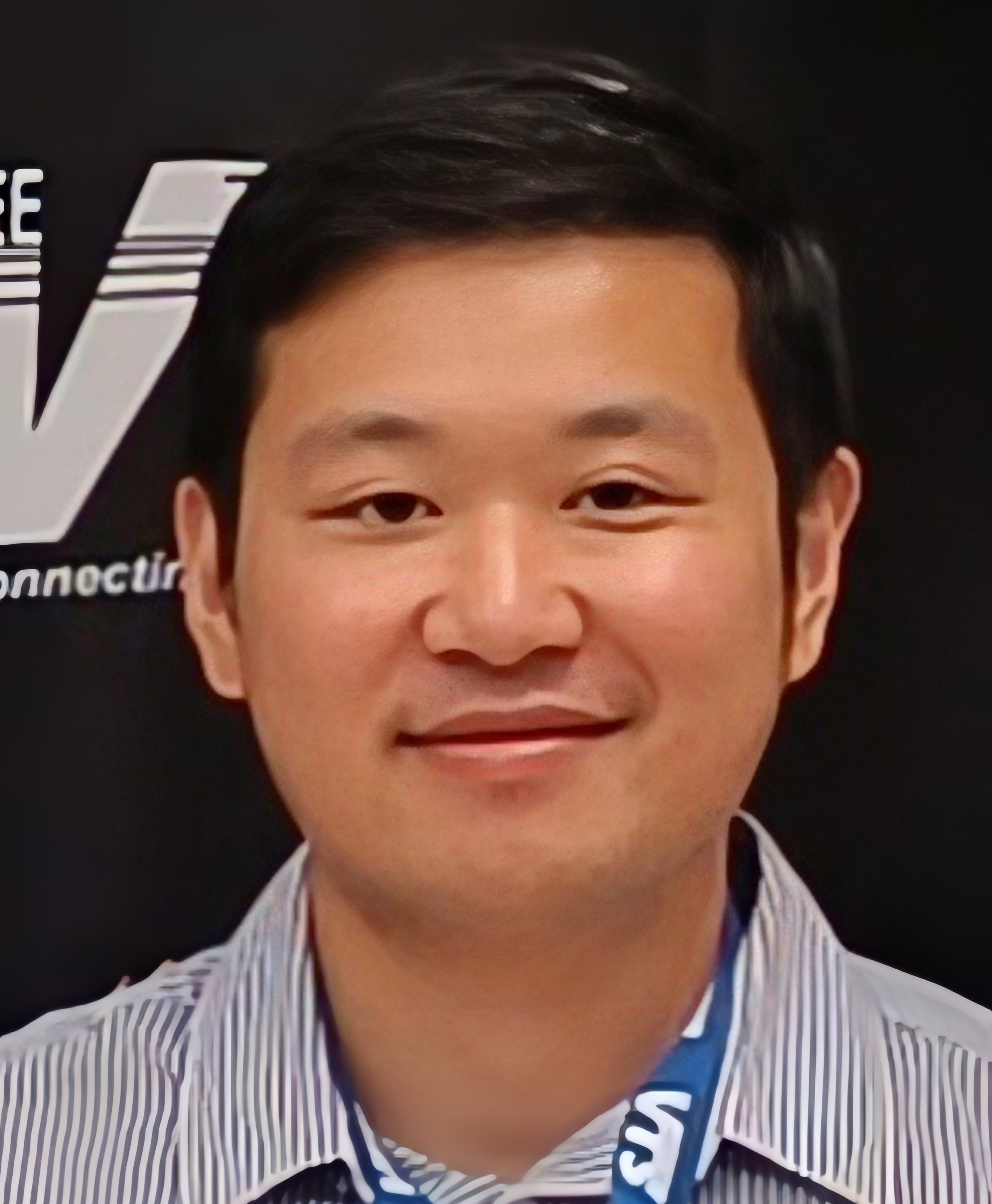}}]{Zexin Fang} received the B.E. degree from Hu Bei University, China, in 2018, the M.Sc. degree from University of Kaiserslautern-Landau (RPTU), Germany, in 2022. He is currently working toward the Ph.D. degree with RPTU. His research interests include localization, network security, and trustworthy AI.
\end{IEEEbiography}

\begin{IEEEbiography}[{\includegraphics[width=1in,height=1.25in,clip,keepaspectratio]{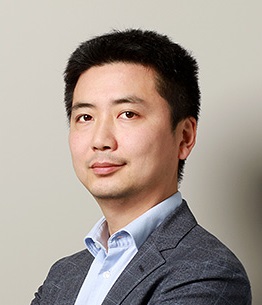}}]{Bin Han} received the B.E. degree in 2009 from Shanghai Jiao Tong University, China, the M.Sc. degree in 2012 from Darmstadt University of Technology, Germany, and the Ph.D. (Dr.-Ing.) degree in 2016 from Karlsruhe Institute of Technology, Germany. He joined University of Kaiserslautern-Landau (RPTU) in July 2016, working as Postdoctoral Researcher and Senior Lecturer. In November 2023, he was granted the teaching license (Venia Legendi) by RPTU, and there with became a Privatdozent. He is the author of two books, six book chapters, and over 80 research papers. He has participated in multiple EU FP7, Horizon 2020, and Horizon Europe research projects. He is an Editorial Board Member for Network, and has served in the organizing committees and/or TPCs of IEEE GLOBECOM, IEEE ICC, EuCNC, European Wireless, and ITC. He is actively involved in the IEEE Standards Association Working Groups P1955, P2303, P3106, and P3454.
\end{IEEEbiography}

\begin{IEEEbiography}[{\includegraphics[width=1in,height=1.25in,clip,keepaspectratio]{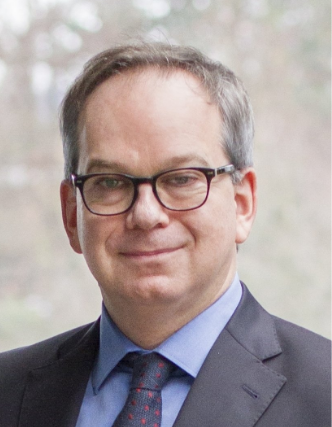}}]{Hans D. Schotten} received his Ph.D. from RWTH Aachen University in 1997. He worked for Ericsson (1999-2003) and Qualcomm (2003-2007). In 2007, he became a full professor at the University of Kaiserslautern. Since 2012, he has also served as scientific director at the German Research Center for Artificial Intelligence (DFKI). He was dean of Electrical Engineering (2013-2017) and became chairman of the German Society for Information Technology in 2018. Professor Schotten has authored over 300 papers and contributed to more than 50 research projects.
\end{IEEEbiography}

\end{document}